\documentclass[preprint,12pt]{elsarticle}

%\pdfout=1

%get rid of the first page footer (default says "Preprint submitted to Elsevier")
\makeatletter
\def\ps@pprintTitle{%
 \let\@oddhead\@empty
 \let\@evenhead\@empty
 \def\@oddfoot{}%
 \let\@evenfoot\@oddfoot}
\makeatother
 
%absolute value and norm commands
\newcommand{\abs}[1]{\left\lvert#1\right\rvert}
\newcommand{\norm}[1]{\left\lVert#1\right\rVert}

\usepackage[table]{xcolor}

%figure packages
\usepackage{graphicx}
\usepackage{epsfig}
\usepackage{subfigure}
\usepackage{wrapfig}
\usepackage{bm}% bold math
\usepackage{dcolumn}% Align table columns on decimal point
\usepackage{amsmath, amsthm, amssymb}

\usepackage{multirow}

%new citation styles
%\usepackage{natbib}
%\bibpunct{(}{)}{;}{a}{,}{,}

%set line spacing to 1.5
%\usepackage{setspace}
%\onehalfspacing

%set font to Times
%\usepackage{times}

%commands for list of abbreviations, must also run "makeindex paper.nlo -s nomencl.ist -o paper.nls" in the terminal to compile the index
%\usepackage{nomencl}
%\let\abbrev\nomenclature
%\renewcommand{\nomname}{List of Abbreviations}
%\setlength{\nomlabelwidth}{.25\hsize}
%\renewcommand{\nomlabel}[1]{#1 \dotfill}
%\setlength{\nomitemsep}{-\parsep}
%\makenomenclature
%\newcommand{\Listofabbrev}{
%\printnomenclature
%}

%colored links and references
%\usepackage[usenames]{color}
\usepackage[]{hyperref}

%place figures at the end of the document
%\usepackage[nolists,nomarkers]{endfloat}

\newtheorem{theorem}{Theorem}[section]

\newenvironment{definition}[1][Definition]{\begin{trivlist}
\item[\hskip \labelsep {\bfseries #1}]}{\end{trivlist}}

\journal{Computational Physics}

\begin{document}

\begin{frontmatter}

\title{A 3D fast algorithm for computing Lagrangian coherent structures via ridge tracking}

 \author[mae,CU]{Doug Lipinski}
 \author[mae,ece,CU]{Kamran Mohseni}

 \address[mae]{Dept. of Mechanical and Aerospace Engineering, University of Florida, P.O. Box 116250, 231 MAE-A, Gainesville, FL 32611}
 \address[ece]{Dept. of Electrical and Computer Engineering, University of Florida, P.O. Box 116200 
216 Larsen Hall 
Gainesville, FL 32611}
\address[CU]{This work was begun while the authors were affiliated with the Departments of Applied Mathematics and Aerospace Engineering Sciences at the University of Colorado - Boulder.}
 
% \thanks{Ph.D. Candidate, Applied Mathematics, University of Colorado - Boulder; Research Assistant, Dept. %of Mechanical and Aerospace Engineering, University of Florida - Gainesville}
%          and Kamran Mohseni\thanks{Prof., Dept. of Mechanical and Aerospace Engineering and Dept. of Electrical and Computer Engineering, University of Florida - Gainesville}}

\begin{abstract}
Lagrangian coherent structures (LCS) in fluid flows appear as co-dimension one ridges of the finite time Lyapunov exponent (FTLE) field. In three-dimensions this means two-dimensional ridges. A fast algorithm is presented here to locate and extract such ridge surfaces while avoiding unnecessary computations away from the LCS. This algorithm reduces the order of the computational complexity from $\mathcal{O}(1/dx^3)$ to about $\mathcal{O}(1/dx^{2})$ by eliminating computations over most of the three dimensional domain and computing the FTLE only near the two-dimensional ridge surfaces. The algorithm is grid based and proofs of error bounds for ridge locations are included. The algorithm performance and error bounds are verified in several examples. The algorithm offers significant advantages in computational cost as well as later data analysis.
\end{abstract}

\begin{keyword}
Lagrangian coherent structures \sep fast algorithm \sep ridge tracking
\end{keyword}

\end{frontmatter}

\section{Introduction}

Lagrangian coherent structures (LCS) have seen increasingly popular use for visualizing and quantifying fluid structures and transport and mixing behavior in fluid flows. LCS were proposed by Haller and Yuan as the locally most repelling or attracting material lines in a flow~\cite{HallerG:00a}. Shadden et al.~\cite{Marsden:05g} later defined LCS as ridges of the finite time Lyapunov exponent (FTLE) field. Throughout this paper we will use this definition of LCS (ridges of the FTLE field), but other definitions of LCS will be discussed in the conclusions. LCS are known to have many useful properties such as denoting barriers to transport and repelling or attracting material surfaces~\cite{Marsden:05g}. Additionally, they form unambiguous boundaries to well known coherent structures such as vortices~\cite{Marsden:06a,Mohseni:08f} and are relatively insensitive to small errors~\cite{HallerG:02c}.

The FTLE field is typically computed numerically by integrating particle trajectories through the flow to approximate the flow map and then using finite differences to approximate the Jacobian of the flow map. The flow map which maps particles from their initial position at time $t_0$ to a final position at time $t_0+T$, $\Phi$, is given by 
\begin{equation}
\Phi_{t_0}^{t_0+T}(\mathbf{x}) = \mathbf{x}(t_0)+\int_{t_0}^{t_0+T}\mathbf{v}(\mathbf{x}(t)) dt
\end{equation}
and may be computed from analytical, experimental, or numerical velocity data. If the velocity data is not analytically defined it is usually necessary to read the velocity from data files and perform interpolations in space and time. The Jacobian of $\Phi$ is used to compute the deformation tensor, $\Delta$, which contains information about the stretching in the flow,
\begin{equation}
\Delta = \left(\frac{d\Phi}{d\mathbf{x}}\right)^{*}\left(\frac{d\Phi}{d\mathbf{x}}\right).
\end{equation}
Finally, the largest eigenvalue of $\Delta$ is used to define the finite time Lyapunov exponent, $\sigma$,
\begin{equation}
\sigma_{t_0}^T(\mathbf{x}) = \frac{1}{|T|}\ln\sqrt{\lambda_{max}(\Delta)}.
\end{equation}
Large values of the FTLE correspond to large amounts of stretching in the flow. Additionally, one may compute the flow map either forward or backward in time (positive or negative $T$). Ridges in the FTLE field are then expected to correspond to either the locally most attracting or repelling lines in the flow. This is not strictly true since it is also possible for high shear regions to exhibit large FTLE values, but experience has shown that FTLE ridges are often sufficient to learn about the underlying flow structure. More advanced techniques may be used to ensure that shear structures are not selected or that the resulting LCS are exact barriers to transport (e.g. Haller,~\cite{HallerG:11a}).

One of the largest hurdles to more widespread use of LCS techniques is the large time required to compute the FTLE. For typical fluid flows, computing the FTLE field requires advecting large numbers of particles through the flow. This must be done for each time step at which the FTLE field is desired. These particle advections dominate the total computational cost of any LCS algorithm.

More efficient algorithms are needed to address the large cost associated with LCS computations. Several past papers have attempted to address this problem with adaptive mesh refinement (AMR) algorithms that refine the computational mesh near the LCS~\cite{GarthC:07a, SadloF:07a}. This is effective, but still results in computing many FTLE values away from the LCS.

Additional attempts have been made to reuse computations to compute the LCS at subsequent time steps~\cite{RowleyCW:10}. Since the LCS are often desired at many different times and the time step may be smaller than the integration time used this leads to overlapping integrations. For example, if the LCS are desired at times $t=\{0.0,0.1,0.2,...10.0\}$ and an integration time of $T=1.0$ is to be used, the standard approach would be to compute a series of flow maps $\{\Phi_{0.0}^{10.0},\Phi_{0.1}^{10.1},\Phi_{0.2}^{10.2},\ldots\}$ for each time the FTLE field is required. However, it is instead possible to compute the flow maps $\{\Phi_{0.0}^{0.1},\Phi_{0.1}^{0.2},\Phi_{0.2}^{0.3},\ldots\}$ and use the fact that $\Phi_{0.0}^{10.0}=\Phi_{9.9}^{10.0}\circ\Phi_{9.8}^{9.9}\circ\Phi_{9.7}^{9.8}\cdots\Phi_{0.0}^{0.1}$. This composition of flow maps technique has the potential for large efficiency gains, but only if there is a significant overlap in the integration times and many time steps are desired. The method also comes at the cost of greatly increased memory usage to store all the necessary flow maps.

Another recent paper has reformulated the FTLE problem as an Eulerian level set problem involving the solution of a Liouville equation~\cite{LeungS:11}. This allows the use of any previously developed high order accurate schemes for the resulting hyperbolic PDE's. Although there are many existing techniques for solving such hyperbolic systems, the time step used in the Eulerian method must obey a CFL condition to ensure stability while the time step in more commonly used Lagrangian techniques is typically determined by the required accuracy. This typically means that the Lagrangian techniques are faster.

We propose another alternative to any of these approaches: detect and track the ridges on the fly. A recent publication discusses a gridless ridge tracking algorithm for computing the FTLE ridges in 2D flows~\cite{Mohseni:10g}. However, despite it's effectiveness in 2D (providing speedups of up to $80\times$), that  technique is not accompanied by proofs of convergence or error bounds on the results and does not readily generalize to higher dimensions. This is because one-dimensional ridges in a 2D flow may be represented as a simple curve and need only be tracked in two directions, but in three or more dimensions, the ridges are at least two-dimensional surfaces which require a more sophisticated representation.

In this paper, we present a fast algorithm for efficiently computing LCS in three-dimensional flows. In fact, most of the algorithm may be used for $n>0$ dimensions, but the surface triangulation used here is specific to 2D surfaces in a 3D space. All computations are performed on a predetermined orthogonal grid to simplify implementation and surface triangulation. A grid-less algorithm requires significant amounts of time to generate surface meshes, but by using a fixed grid, we are able to efficiently generate a surface triangulation via a lookup table similar to the marching cubes algorithm that is used for computing isosurfaces~\cite{LorensenWE:87a}. FTLE ridges are initially detected by computing the FTLE values on a series of lines across the domain. Local maxima along these lines occur at the FTLE ridge crossings. Once a series of points on the ridges have been detected, nearby points are tested to see if they are also on the ridge. This process is repeated to track the ridges through the entire domain. By performing computations only near the LCS surfaces, the order of the algorithm is reduced from $\mathcal{O}(1/dx^3)$ to about $\mathcal{O}(1/dx^2)$.

We present the results from several test cases, including a time dependent double gyre, Arnold-Beltrami-Childress (ABC) flow, and a swimming jellyfish. These results establish the computational order of the algorithm. We also investigate the computation cost for an FTLE field where the surface area of the LCS is known a priori and find that for a fixed grid spacing, the computational time is $C+\mathcal{O}(A_{LCS})$ where $C$ is a constant initialization cost and $A_{LCS}$ is the area of the LCS surfaces. Finally, the ridge tracking algorithm offers several other advantages beyond the savings in computational time. Since the LCS surfaces are directly computed, visualization of the LCS is simplified

\section{A surface tracking algorithm}
\label{sec:algorithm}

The computational savings seen by using a ridge tracking algorithm come from avoiding unnecessary computations away from the FTLE ridges. We separate this process into three steps: detecting initial points on the ridge surfaces, tracking the ridges through space, and triangulating the ridge points into a ridge surface. The initial ridge detection is handled by detecting where lines through the domain cross the ridge surfaces. The ridges are then iteratively grown by searching for nearby ridge points until no new ridge points are found. Finally, the use of a gridded coordinate system throughout this process allows the use of a lookup table to efficiently generate a triangulation of the resulting LCS surfaces.

To begin, we need a ridge definition. Shadden et al.~\cite{Marsden:05g} offers two definitions of ridges in two dimensions. Here, we extend the concept of second derivative ridges to $n$ dimensions:
\begin{definition}
{\it Ridge}: A {\it ridge} of a $C^2$ function $F$ is a co-dimension one surface $S$ satisfying
\begin{enumerate}
\item The vectors $\mathbf{n} \cdot \nabla F=0$ for all points on $S$ where $\mathbf{n}$ is a unit vector normal to $S$.
\item $\mathbf{n}^TH\mathbf{n}=\min_{\norm{\mathbf{u}}=1}\left(\mathbf{u}^TH\mathbf{u}\right)<0$ for all points on $S$ where $H$ is the Hessian matrix associated with $F$.
\end{enumerate}
\end{definition}

\subsection{Initial ridge detection}
\label{sec:initial_detection}

\begin{figure}
\centering
 \includegraphics[width=.4 \textwidth]{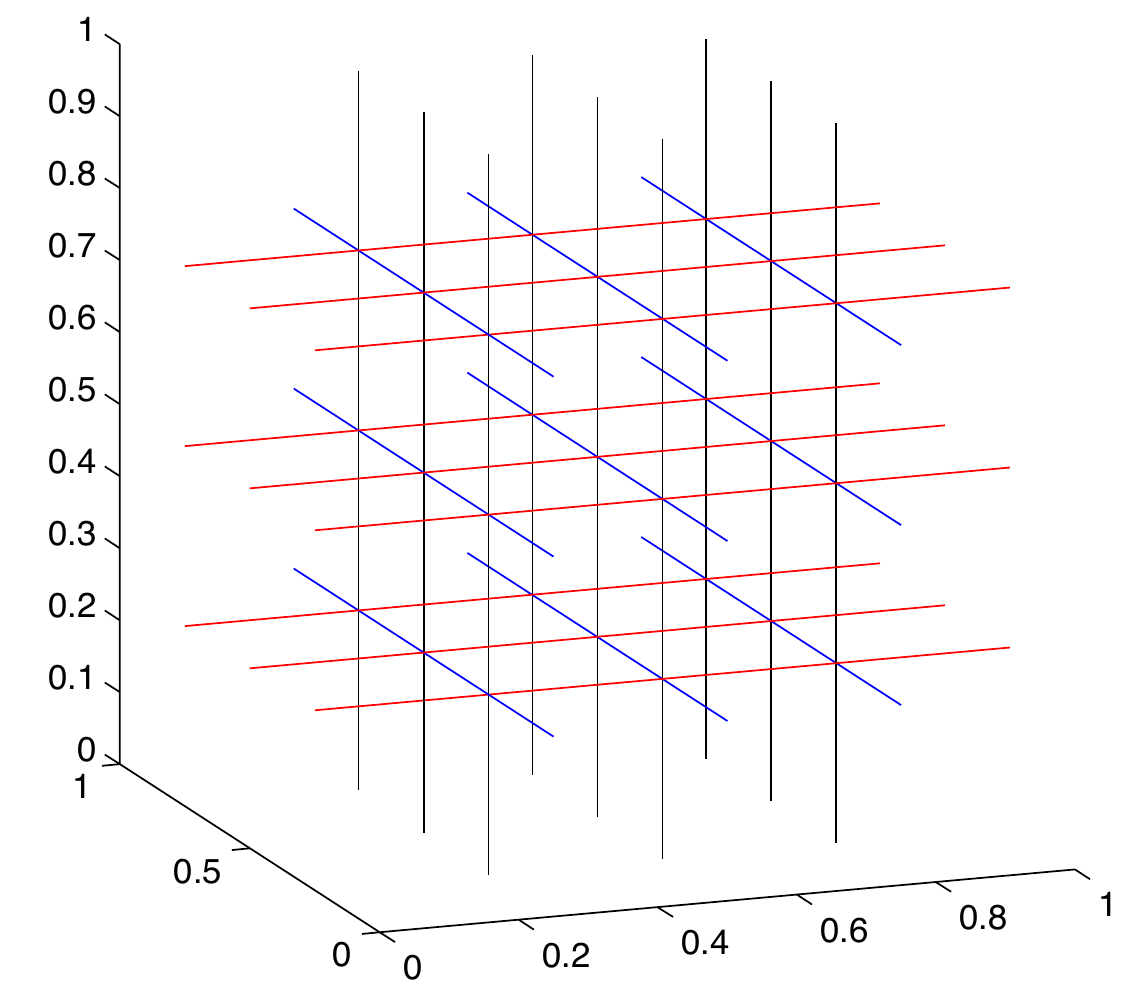}
 \caption{Initial ridge detection is handled by looking for local FTLE maxima along a few lines through the domain.}
 \label{fig:grid}
\end{figure}

If a hiker walks in a straight line, constantly monitoring his altitude, he will reach a locally maximum altitude upon crossing a ridge in the terrain. Similarly, if the FTLE is known along a line through a three-dimensional domain, local maxima along the line occur where the line crosses FTLE ridges. We restrict our computations to a fixed grid with spacing $dx$ and initially detect the FTLE ridges by computing the FTLE values along a set of lines that cross the domain as seen in Fig. \ref{fig:grid}. The number and spacing of the lines is determined by the user and depends on the expected spatial extend of the LCS in the flow. If an LCS surface does not intersect any of the lines used and is isolated from other LCS it may be missed entirely.

Once the FTLE values along these lines are known, we look for local maxima on each line and define grid ridge points as follows:
\begin{definition}
{\it Grid Ridge Point}:
Given an orthogonal grid in $\mathbb{R}^n$ with coordinate directions $\mathbf{e}_i: i\in\{1,...,n\}$; a grid point $\mathbf{x}_0$ is a {\it grid ridge point} of a function $F$ if $F(\mathbf{x}_0) \ge F(\mathbf{x}_0 \pm \mathbf{e}_i)$ for at least one $i\in\{1,...,n\}$.
\end{definition}
The local maxima on each line are grid ridge points and we begin tracking the FTLE ridges from these points. In the limit as grid spacing goes to zero, the grid ridge points converge to coordinate local maxima defined as:
\begin{definition}
{\it Coordinate local maximum}:
A point $\mathbf{x}_0$ is a coordinate local maximum of the function $F$ with respect to coordinate direction $\mathbf{e}_i$ if there is a value $\varepsilon>0$ such that $F(\mathbf{x}_0)>F(\mathbf{x}_0+\delta \cdot \mathbf{e}_i)$ for all $\delta<\varepsilon$.
\end{definition}
This property is proven below in Section \ref{sec:properties}.

It is also desirable to set a threshold for the FTLE ridge values at this time. Only detecting ridges with FTLE values above some threshold ensures that only the strongest LCS are revealed. This is often done by restricting the ridges to have FTLE values above a percentage of the maximum FTLE values that are detected. Typically $50-70\%$ of the maximum value is an adequate choice. Although the grid ridge points may exhibit false positives in the sense that such points may not necessarily correspond to FTLE ridges as defined by Shadden et al.~\cite{Marsden:05g}, experience has shown that this happens infrequently and typically does not alter the topology of the resulting ridge surfaces.

\subsection{Ridge tracking}

The heart of this algorithm lies in tracking the ridges outward from the initially detected grid ridge points. A schematic of this process for a two dimensional example is shown in Fig. \ref{fig:process}. The neighbors of each initially detected grid ridge point are checked to see if any meet the criteria to be grid ridge points (Fig. \ref{fig:process}b). In 3D, points in the grid are indexed by $(i,j,k)$ for the $x$, $y$, and $z$ directions. For each newly detected grid ridge point, $(i,j,k)$, the neighboring points in $[i-1,i+1] \times [j-1,j+1] \times [k-1,k+1]$ are checked to see if any are grid ridge points. If new grid ridge points are found that lie above the FTLE threshold the neighbors of those points are checked. This process is repeated until no new grid ridge points are found.
\begin{figure}
\centering

\begin{minipage}[b]{.4\linewidth}
\centering
\includegraphics[width=1 \textwidth]{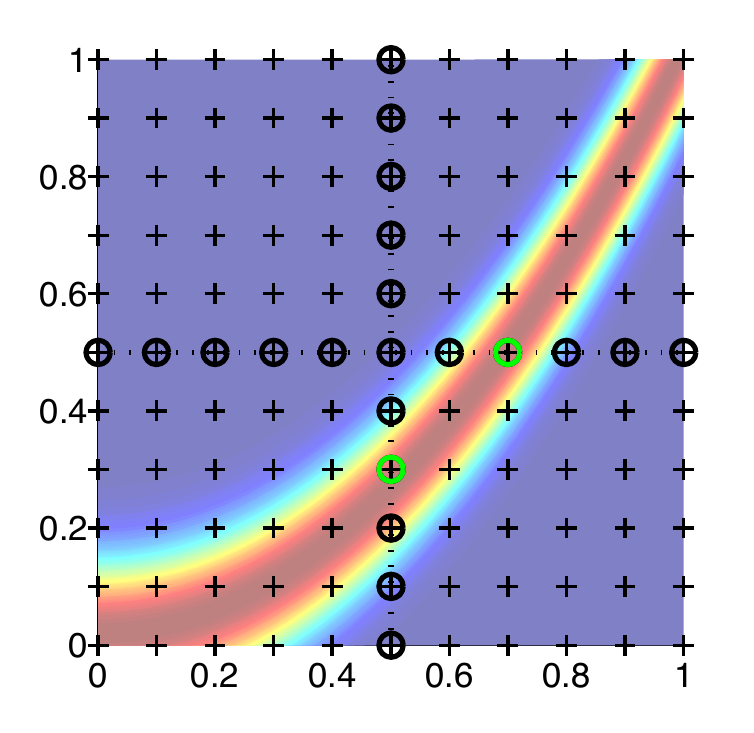}
\end{minipage}
\begin{minipage}[b]{.4\linewidth}
\centering
\includegraphics[width=1 \textwidth]{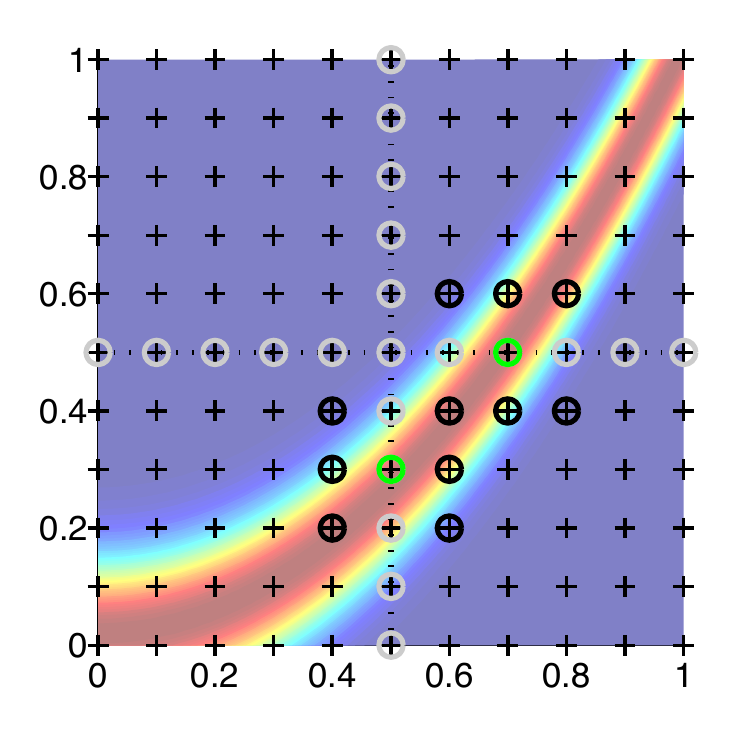}
\end{minipage}\\

\begin{minipage}[b]{.4\linewidth}
\centering
\footnotesize
{\bf (a)} Initial grid ridge points detection step, green circles are grid ridge points.
\end{minipage}
\begin{minipage}[b]{.4\linewidth}
\centering
\footnotesize
{\bf (b)} The next set of points to check for ridges (black $\mathbf \circ$'s) and previously checked points (gray $\mathbf \circ$'s).
\end{minipage}

\begin{minipage}[b]{.4\linewidth}
\centering
\includegraphics[width=1 \textwidth]{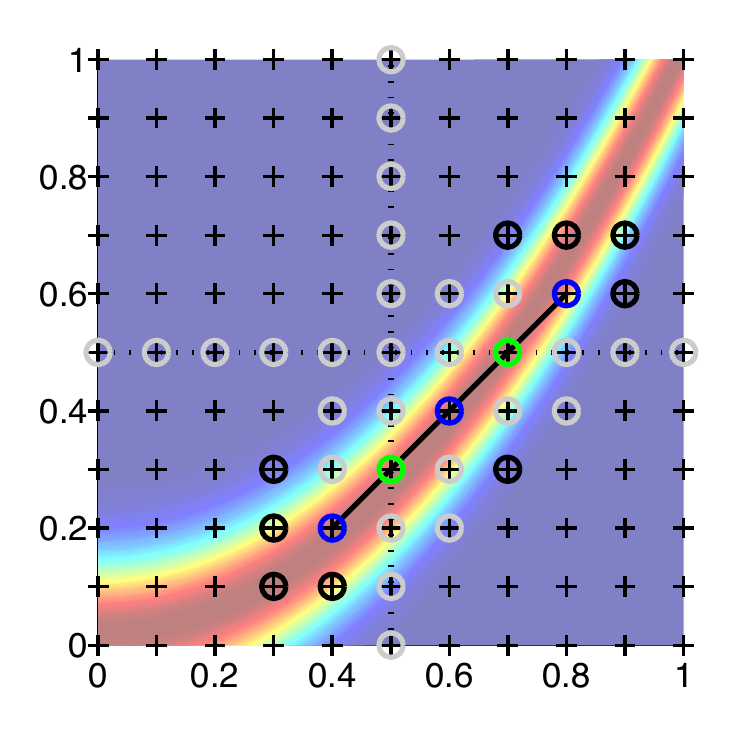}
\end{minipage}
\begin{minipage}[b]{.4\linewidth}
\centering
\includegraphics[width=1 \textwidth]{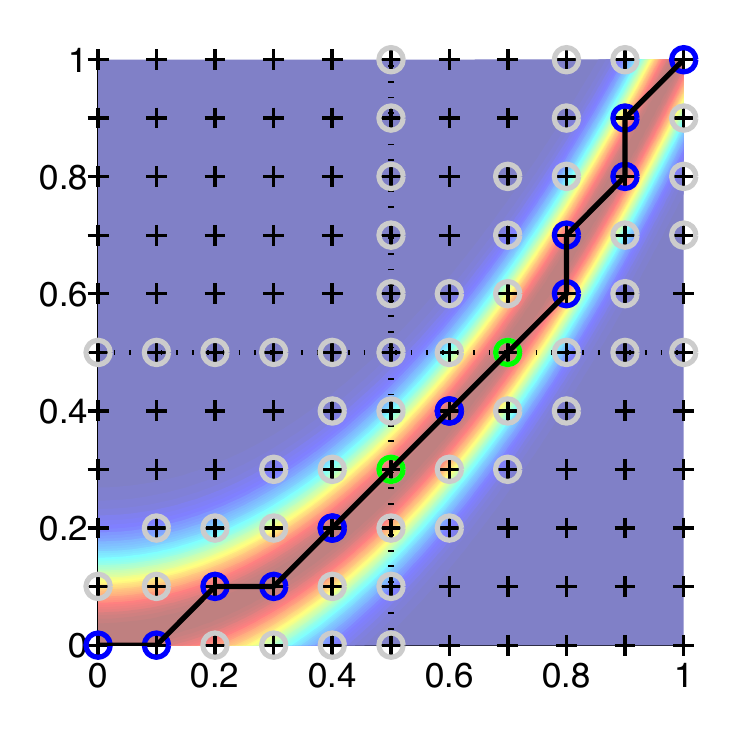}
\end{minipage}\\

\begin{minipage}[b]{.4\linewidth}
\centering
\footnotesize
{\bf (c)} New grid ridges (blue), new points to check (black) and previously checked points (gray).
\end{minipage}
\begin{minipage}[b]{.4\linewidth}
\centering
\footnotesize
{\bf (d)} The resulting LCS.
\end{minipage}
 
 \caption{The ridge tracking process in 2D. The background color represents the FTLE field, $\mathbf +$'s mark grid points. The 3D process is analogous, but a surface triangulation is used instead of line segments.}
 \label{fig:process}
\end{figure}

It is important to note that this algorithm detects {\it grid ridge points} as defined above, and therefore detects {\it coordinate local maxima} (also defined above). If the ridge height is not constant (i.e. $\nabla F\ne \mathbf{0}$ on the ridge) the grid ridge points will not converge to the actual ridges. Section \ref{sec:properties} provides an error bound and corresponding proofs addressing this issue, but the error is typically smaller than reasonable grid spacings.

\subsection{Surface triangulation}

At the end of the ridge tracking process, a large list of grid ridge points is obtained. These points must be connected into a surface triangulation for visualization and further analysis. Initial attempts at developing a 3D ridge tracking algorithm revealed that grid-less techniques requiring surface meshing where both complex and expensive because of the surface meshing process. By using a gridded coordinate system, it is possible to very quickly generate a surface triangulation from a lookup table similar to the process used in the marching cubes algorithm that is popular for isosurface generation~\cite{LorensenWE:87a}.

\begin{figure}
\centering
\subfigure[]{
\includegraphics[width = .17\textwidth]{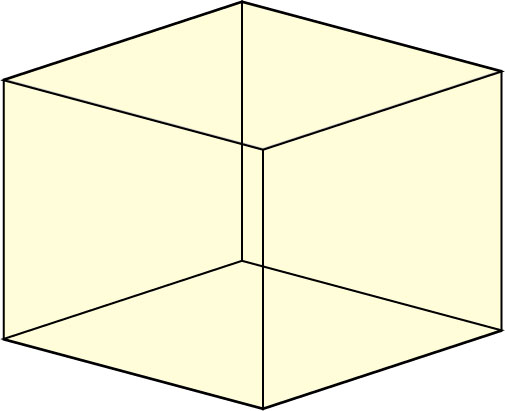}
}
\subfigure[]{
\includegraphics[width = .17\textwidth]{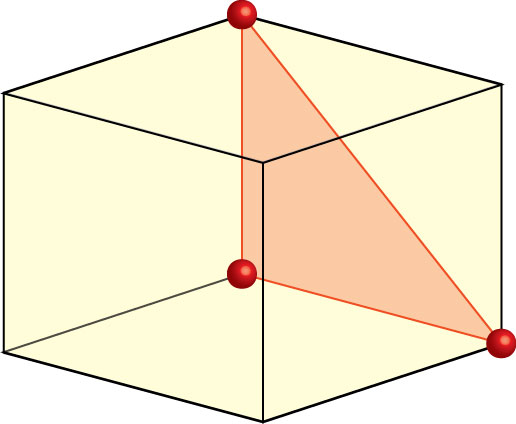}
}
\subfigure[]{
\includegraphics[width = .17\textwidth]{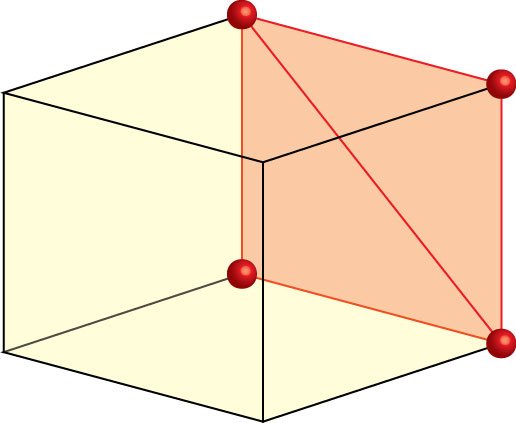}
}
\subfigure[]{
\includegraphics[width = .17\textwidth]{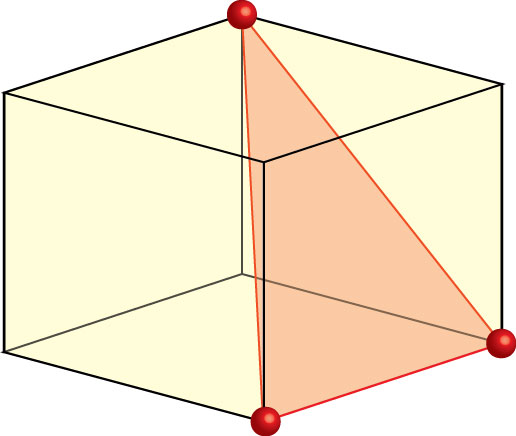}
}
\subfigure[]{
\includegraphics[width = .17\textwidth]{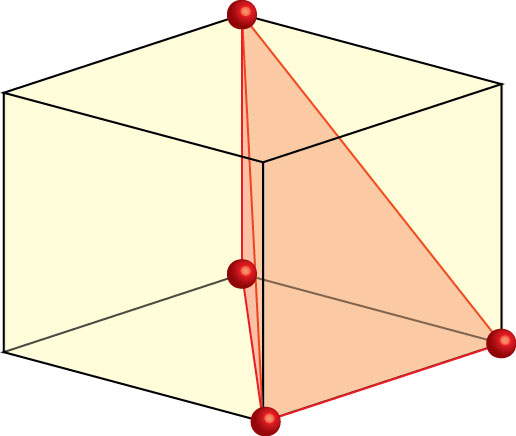}
}
\subfigure[]{
\includegraphics[width = .22\textwidth]{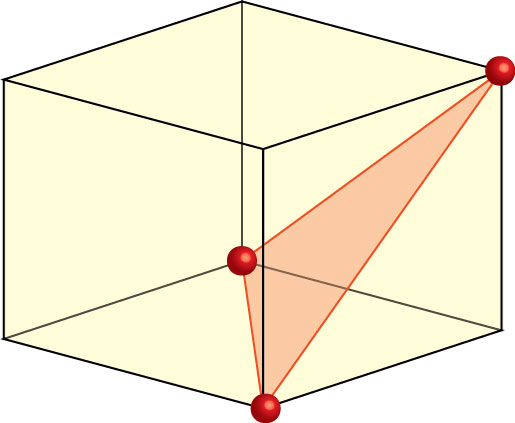}
}
\subfigure[]{
\includegraphics[width = .22\textwidth]{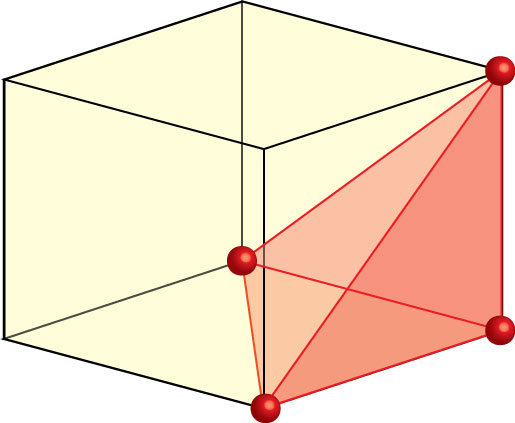}
}
\subfigure[]{
\includegraphics[width = .22\textwidth]{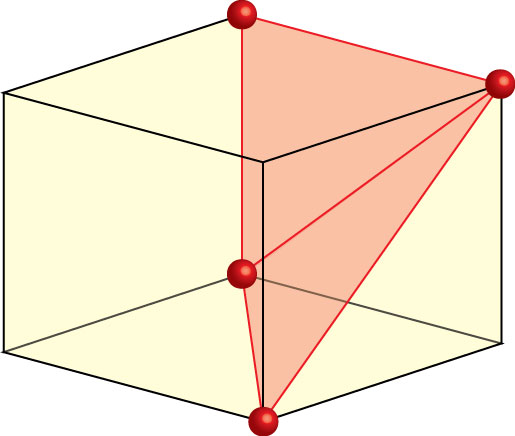}
}
\subfigure[]{
\includegraphics[width = .22\textwidth]{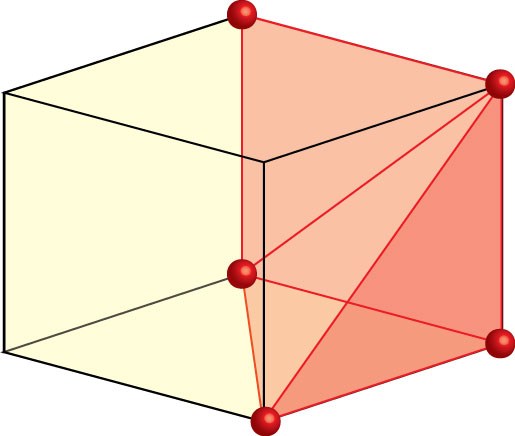}
}
\subfigure[]{
\includegraphics[width = .22\textwidth]{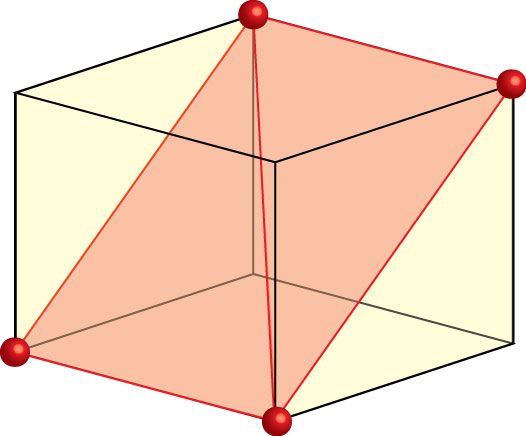}
}
\subfigure[]{
\includegraphics[width = .22\textwidth]{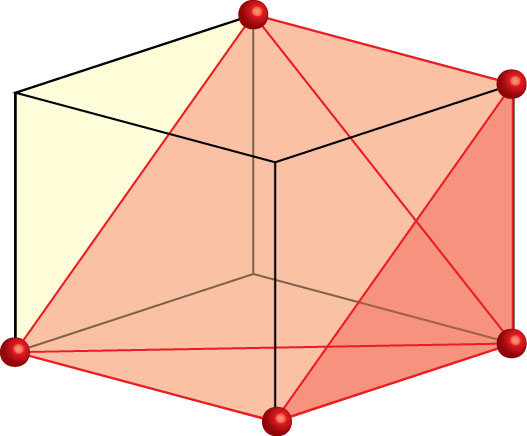}
}
\subfigure[]{
\includegraphics[width = .22\textwidth]{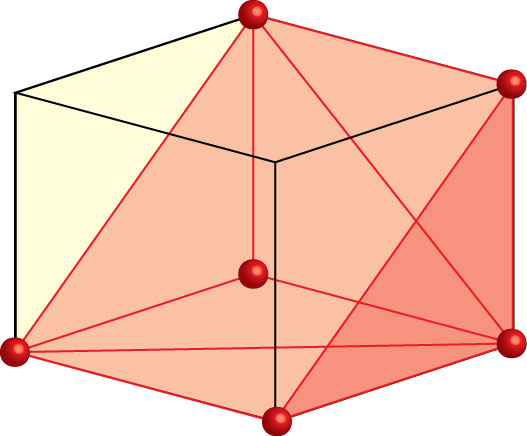}
}
\subfigure[]{
\includegraphics[width = .22\textwidth]{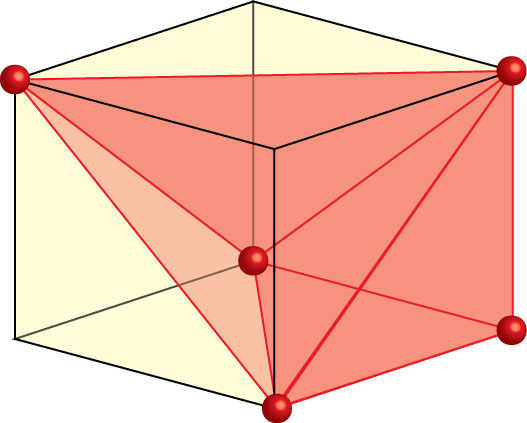}
}
\subfigure[]{
\includegraphics[width = .22\textwidth]{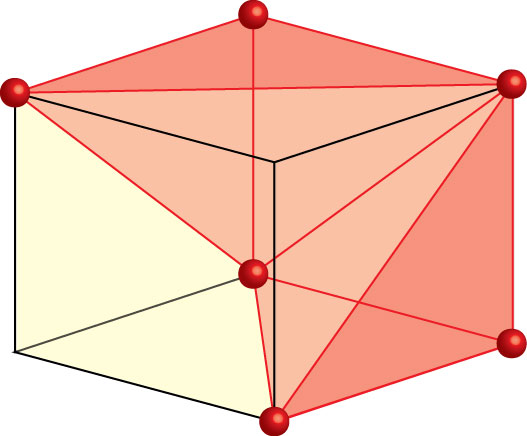}
}
\subfigure[]{
\includegraphics[width = .22\textwidth]{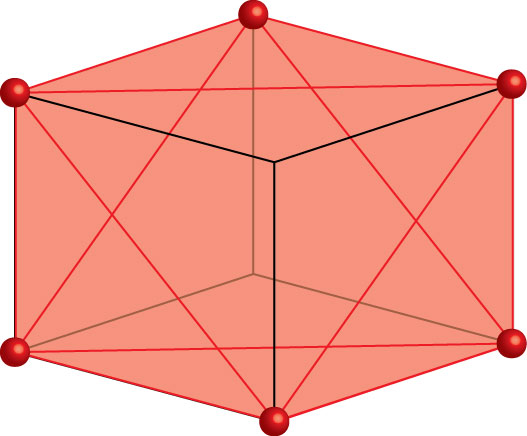}
}
\subfigure[]{
\includegraphics[width = .22\textwidth]{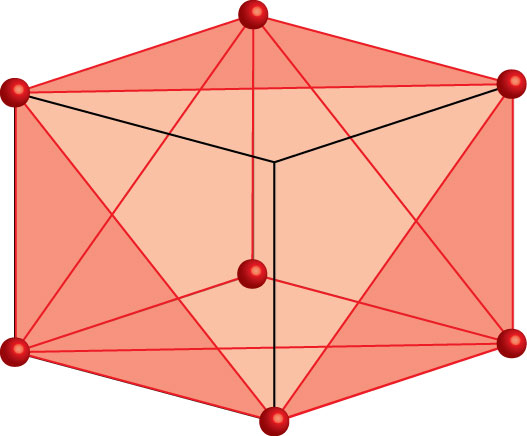}
}
\subfigure[]{
\includegraphics[width = .22\textwidth]{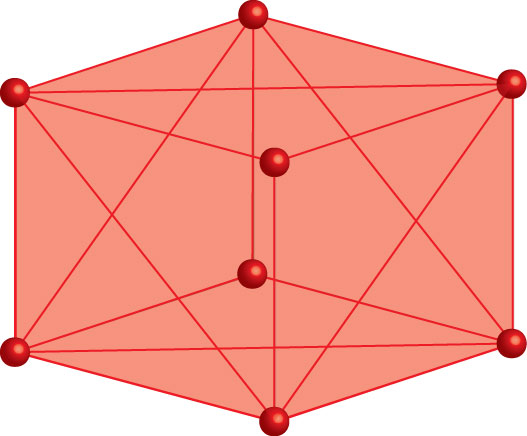}
}
 \caption{The canonical cases used in the lookup table for surface triangulation.}
 \label{fig:cases}
\end{figure}

After all the grid ridge points have been found, each cubic element in the domain is given an 8-bit number corresponding to the configuration of the grid ridge points on the elements 8 vertices. The 256 possible configurations are listed in a lookup table which directly converts the 8-bit number to a triangulation of the grid ridge points in the element.

Generating the lookup table is accomplished by using reflections and rotations to reduce the the 17 canonical cases shown in Fig. \ref{fig:cases}. Elements containing two or fewer grid ridge points contain no LCS surface triangles while elements containing 3 or more grid ridge points are added to the surface triangulation. 

The lookup table makes the surface triangulation step in this algorithm extremely efficient, but at the cost of adding some ambiguity in the precise triangulation that should be used for certain ridge point configurations. This is a well known problem in isosurface construction and is commonly dealt with by making assumptions about the field at the subgrid scale and testing additional points within the element. Similar tests may be possible for the FLTE ridges, but we have chose to err on the side of adding extra triangles to these ambiguous cases surfaces to avoid gaps in the LCS surfaces.

\section{Algorithm properties}
\label{sec:properties}

In this section we address the important properties of the ridge tracking algorithm by proving that well defined ridges are detected by the ridge tracking algorithm and the error in the location of the ridges is typically very small. The theorems contained below rely on the ridge definition presented above in Section \ref{sec:algorithm}. A few important properties are summarized here for convenience:
\begin{itemize}
\item $\nabla F$ is parallel to the ridge.
\item The eigenvectors $\{\mathbf{v}_i\}$ of the Hessian $H$ of $F$ form a complete, orthonormal basis for the space $\mathbb{R}^n$ since $H$ is symmetric.
\item The eigenvector $\mathbf{v}_1$ associated with the minimum eigenvector $\lambda_1$ of $H$ is normal to the ridge.
\end{itemize}

To prove that the ridge tracking algorithm accurately detects ridges, we first show that grid ridges converge to coordinate local maxima and then show that well defined ridges always have a nearby coordinate local maximum and find a bound on the distance between a ridge and the nearest coordinate local maximum.

%%%%%%%%%%%%%%%%%%%%%%%%%%%%%%%%%%%%%%%%%%%%%
%%%%%%%%%%%%%%%%%%%%%%%%%%%%%%%%%%%%%%%%%%%%%
%%%%%%%%%%%%%%%%%%%%%%%%%%%%%%%%%%%%%%%%%%%%%
\begin{theorem}
\label{thm:ridgeclose2grid}
Let $F$ be a $C^2$ continuous function and let $L$ be a grid line parallel to $\mathbf{e}_i$. Assume $\mathbf{x}^*$ on $L$ is a local maximum of $F$ in the $\mathbf{e}_i$ direction and the second derivative of $F$ in the $\mathbf{e}_i$ direction is negative. Then there exists a value $\epsilon$ such that if the grid spacing $dx<\epsilon\implies \norm{\mathbf{x}^*-\mathbf{x}^g}\le dx$ for some grid point $\mathbf{x}^g$
\end{theorem}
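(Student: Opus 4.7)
The plan is to reduce the problem to a one-dimensional analysis along $L$, use $C^2$ regularity together with the sign hypothesis to set up a strict concavity neighbourhood of $\mathbf{x}^*$, and then locate the nearest grid point on $L$ by pigeonhole.

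First I would restrict to one dimension by writing $f(t) = F(\mathbf{x}^* + t\mathbf{e}_i)$. The assumption that $\mathbf{x}^*$ is a local maximum of $F$ in the $\mathbf{e}_i$ direction gives $f'(0)=0$, and the second-derivative hypothesis gives $f''(0)<0$. Since $F\in C^2$, $f''$ is continuous, so there exists $\rho>0$ with $f''(t)<0$ throughout $(-\rho,\rho)$. On this interval $f$ is therefore strictly concave and has its unique maximum at $t=0$, so $f$ is strictly increasing on $(-\rho,0]$ and strictly decreasing on $[0,\rho)$. I would then take $\epsilon=\rho$ (or a fixed fraction of $\rho$ if the triple $\{t^g-dx,\,t^g,\,t^g+dx\}$ needs to sit inside the concavity neighbourhood).

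For $dx<\epsilon$, the intersection of the grid with $L$ is a one-dimensional lattice of spacing $dx$, and since $\mathbf{x}^*\in L$, there is always a lattice point $\mathbf{x}^g$ on $L$ whose parameter $t^g$ satisfies $|t^g|\le dx/2$. This immediately yields $\norm{\mathbf{x}^*-\mathbf{x}^g}=|t^g|\le dx/2\le dx$, completing the bound.

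The quantitative bound itself is essentially this pigeonhole observation; the real work done by $\epsilon$ is to force $\mathbf{x}^g$ into the concavity neighbourhood, which is what makes the statement useful to the ridge-tracking algorithm downstream---in particular, for later arguments that $\mathbf{x}^g$ verifies the coordinate-local-max test $f(t^g)\ge f(t^g\pm dx)$ of a grid ridge point. I expect the harder part of the broader program (handled in the subsequent theorems alluded to in this section) to be precisely that verification: a Taylor expansion around $t=0$ gives $f(t^g\pm dx)-f(t^g)=\tfrac{1}{2}\bigl[f''(\xi_\pm)(t^g\pm dx)^2-f''(\xi_0)(t^g)^2\bigr]$, whose leading-order sign follows from $|t^g\pm dx|\ge dx/2\ge|t^g|$ together with $f''(0)<0$; the delicate boundary case $|t^g|=dx/2$, where the leading term vanishes, is what forces $\epsilon$ to be chosen small enough that the variation of $f''$ across the triple cannot flip the inequality.
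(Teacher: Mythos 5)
Your proof establishes only the literal (and vacuous) reading of the statement: that some grid point on $L$ lies within $dx$ of $\mathbf{x}^*$. That is true of any point on a grid line regardless of $F$, and by itself it carries no information --- the hypotheses on $F$ and the value $\epsilon$ would be superfluous. The substantive content, which the paper's proof actually establishes, is that one of those nearby grid points is a \emph{grid ridge point} in the sense defined in Section~\ref{sec:initial_detection}; that is what the surrounding text uses (``Thm.~\ref{thm:ridgeclose2grid} proves that as grid spacing goes to zero, there are grid ridge points that converge to the coordinate local maxima''). You recognize that this verification is the real work, but you defer it to ``subsequent theorems,'' whereas in the paper it is done right here --- Theorem~\ref{thm:error} addresses a different question (distance from the true ridge to a coordinate local maximum), not the grid-ridge-point test. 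So the proposal as written leaves the intended conclusion unproved.

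There is also a structural mismatch in how you would finish the deferred step. You contemplate choosing the \emph{single} nearest grid point ($|t^g|\le dx/2$) and testing it via a second-order Taylor expansion, and you correctly flag that this runs into trouble at $|t^g|=dx/2$, where the leading term can vanish and the variation of $f''$ becomes decisive. The paper's argument avoids this entirely by working at first order: on the concavity interval $f'$ is strictly monotone decreasing with $f'(0)=0$, hence $f'>0$ strictly to the left of $\mathbf{x}^*$ and $f'<0$ strictly to the right. Taking the two grid points flanking $\mathbf{x}^*$ and one more on each side, $\mathbf{x}_1<\mathbf{x}_2<\mathbf{x}^*<\mathbf{x}_3<\mathbf{x}_4$, one then gets $F(\mathbf{x}_1)<F(\mathbf{x}_2)$ and $F(\mathbf{x}_3)>F(\mathbf{x}_4)$ by the mean value theorem, so whichever of $\mathbf{x}_2,\mathbf{x}_3$ has the larger $F$ value (or both, if tied) is a grid ridge point. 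This requires choosing $\epsilon$ so that all four of these grid points fall inside the concavity interval --- hence the paper's $dx<\varepsilon/2$ rather than $dx<\rho$ --- and the ridge point found may be up to $dx$ from $\mathbf{x}^*$, not $dx/2$, which is why the theorem's bound is $dx$. Committing early to the nearest grid point, as your plan does, is what manufactures the boundary case you would then have to fight.
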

\begin{proof}
Since $\mathbf{x}^*$ is a local maximum in the $\mathbf{e}_i$ direction, there is a value $\varepsilon_1>0$ such that for all $\abs{c_1}<\varepsilon_1$, $F|_{\mathbf{x}^*+c_1\mathbf{e}_i}<F|{\mathbf{x}^*}$. Since $F$ is $C^2$ continuous and $\partial^2 F/\partial {\mathbf{e}_i}^2<0$, $\exists\ \varepsilon \in (0,\varepsilon_1)$ such that 
\begin{equation}
\left.\dfrac{\partial^2 F}{\partial {\mathbf{e}_i}^2}\right|_{\mathbf{x}^*+c\mathbf{e}_1}<0\ \forall\ \abs{c}<\varepsilon.
\end{equation}
This implies that $\partial F/\partial {\mathbf{e}_i}$ is monotonically decreasing over the interval $c\in (-\varepsilon, \varepsilon)$.

Next, choose a grid spacing, $dx<\varepsilon=\varepsilon/2$. This ensures that at least five grid points lie in the intervalal $(\mathbf{x}^*-\varepsilon \mathbf{e}_i,\mathbf{x}^*+\varepsilon \mathbf{e}_i)$ and at least lie two to each side of $\mathbf{x}^*$.

If $\mathbf{x}^*$ is a grid point, then the neighboring grid points have lower values of $F$ since $dx<\varepsilon<\varepsilon_1$ so this grid point is a grid ridge point and there is a grid ridge point less than dx from ${\mathbf{x}^*}$.

If $\mathbf{x}^*$ is not a grid point, label the nearest four grid points $\mathbf{x}_1$, $\mathbf{x}$, $\mathbf{x}_3$, $\mathbf{x}_4$, with $\mathbf{x}^*$ lying between $\mathbf{x}$ and $\mathbf{x}_3$. Since $\partial F/\partial {\mathbf{e}_i}$ is monotonically decreasing over the interval and $\partial F/\partial {\mathbf{e}_i}|_{\mathbf{x}^*}=0$, $F|_{\mathbf{x}_1}<F|_{\mathbf{x}}$ and $F|_{\mathbf{x}_3}>F|_{\mathbf{x}_4}$. If $F|_{\mathbf{x}}>F|_{\mathbf{x}_3}$, $\mathbf{x}$ is a grid ridge point. If $F|_{\mathbf{x}}<F|_{\mathbf{x}_3}$, $\mathbf{x}_3$ is a grid ridge point. If $F|_{\mathbf{x}}=F|_{\mathbf{x}_3}$, both $\mathbf{x}$ and $\mathbf{x}_3$ are grid ridge points. In any of these cases, since $\mathbf{x}^*$ lies between $\mathbf{x}$ and $\mathbf{x}_3$, there is a grid ridge point less than $dx$ from $\mathbf{x}^*$.
\end{proof}
%%%%%%%%%%%%%%%%%%%%%%%%%%%%%%%%%%%%%%%%%%%%%
%%%%%%%%%%%%%%%%%%%%%%%%%%%%%%%%%%%%%%%%%%%%%
%%%%%%%%%%%%%%%%%%%%%%%%%%%%%%%%%%%%%%%%%%%%%

Thm. \ref{thm:ridgeclose2grid} proves that as grid spacing goes to zero, there are grid ridge points that converge to the coordinate local maxima of $F$. Therefore the ridge tracking algorithm detects coordinate local maxima. We now show that every well defined ridge has a nearby coordinate local maximum.

\begin{theorem}
\label{thm:error}
Given a $C^3$ continuous function $F$ that admits a sufficiently sharp second derivative ridge, for every point $\mathbf{x}_0$ on the ridge, there is a nearby point, $\mathbf{x}^*$, in the ridge normal direction that is a coordinate local maximum. The nearest coordinate local maximum is no further from the ridge than 
\[
d=\dfrac{2\sqrt{n-1}\norm{ \mathbf{D}F(\mathbf{x}_0) }}{\abs{\lambda_1}},
\]
as long as 
\[
\norm{ \mathbf{v}_1\cdot(\mathbf{D}^3F) }< \frac{1}{2}\frac{ \lambda_1^2 }{n\sqrt{n-1} \norm{ \mathbf{D}F(\mathbf{x}_0) }}
\]
and
\[
\lambda_n \le \dfrac{n}{n-1} \abs{ \dfrac{\lambda_1}{n} + \dfrac{2\sqrt{n-1}}{\abs{\lambda_1}} \norm{\mathbf{D}F(\mathbf{x}_0)} \norm{\mathbf{v}_1\cdot\mathbf{D}F(\mathbf{\xi}) } },
\]
where $\mathbf{D}$ denotes the gradient operator, $\lambda_1$ (respectively $\lambda_n$) is the minimum (respectively maximum) eigenvalue of the Hessian, $\mathbf{D}^2 F(\mathbf{x}_0)$, $\mathbf{v}_1$ and $\mathbf{v}_n$ are the eigenvectors associated with $\lambda_1$ and $\lambda_n$, and $n$ is the dimension of the space.
\end{theorem}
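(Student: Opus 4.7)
The plan is to build the coordinate local maximum $\mathbf{x}^*$ by starting at $\mathbf{x}_0$ and sliding along the ridge normal $\mathbf{v}_1$, and to choose the coordinate axis $\mathbf{e}_i$ whose alignment with $\mathbf{v}_1$ is the strongest. Since $\mathbf{v}_1$ is a unit vector, $\sum_{i=1}^{n}(\mathbf{e}_i\cdot\mathbf{v}_1)^2=1$, so by pigeonhole there is an index $i$ with $\alpha := |\mathbf{e}_i\cdot\mathbf{v}_1| \ge 1/\sqrt{n}$; fix this $i$ once and for all. A point of the form $\mathbf{x}_0+t\mathbf{v}_1$ is a coordinate local maximum in the $\mathbf{e}_i$ direction precisely when the scalar function $g(t) := \mathbf{e}_i\cdot\nabla F(\mathbf{x}_0+t\mathbf{v}_1)$ vanishes and $\mathbf{e}_i^T H(\mathbf{x}_0+t\mathbf{v}_1)\mathbf{e}_i<0$. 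The argument therefore splits into (i) producing a root $t_*$ with $|t_*|\le d$, and (ii) verifying the concavity condition at $\mathbf{x}^*=\mathbf{x}_0+t_*\mathbf{v}_1$.

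For part (i), the ridge condition $\mathbf{v}_1\cdot\nabla F(\mathbf{x}_0)=0$ forces $\nabla F(\mathbf{x}_0)$ to lie in the tangent plane, so
\[
|g(0)| \le \sqrt{1-\alpha^{2}}\,\norm{\nabla F(\mathbf{x}_0)} \le \sqrt{(n-1)/n}\,\norm{\nabla F(\mathbf{x}_0)}.
\]
Differentiating yields $g'(t)=\mathbf{e}_i^T H(\mathbf{x}_0+t\mathbf{v}_1)\mathbf{v}_1$, with $g'(0)=\lambda_1(\mathbf{e}_i\cdot\mathbf{v}_1)$, hence $|g'(0)|\ge|\lambda_1|/\sqrt{n}$. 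The linear root sits at $|t^{\mathrm{lin}}|=|g(0)|/|g'(0)|\le \sqrt{n-1}\,\norm{\nabla F(\mathbf{x}_0)}/|\lambda_1| = d/2$. The remaining factor of two in $d$ absorbs the quadratic Taylor remainder $\tfrac12 g''(\xi)\,t^{2}$: since $g''(t)=(\mathbf{v}_1\cdot\mathbf{D}^3F)(\mathbf{v}_1,\mathbf{e}_i)$, the first side hypothesis on $\norm{\mathbf{v}_1\cdot\mathbf{D}^3F}$ is exactly what keeps this remainder smaller than $|g'(0)|\,t$ throughout $[0,d]$, so $g$ preserves the sign of $g(0)+g'(0)\,t$ and the intermediate value theorem delivers a genuine zero $t_*$ with $|t_*|\le d$.

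For part (ii), expanding $\mathbf{e}_i$ in the eigenbasis of $H(\mathbf{x}_0)$ gives
\[
\mathbf{e}_i^T H(\mathbf{x}_0)\mathbf{e}_i = \sum_{k=1}^{n}\lambda_k(\mathbf{e}_i\cdot\mathbf{v}_k)^{2} \le \lambda_1\alpha^{2} + \lambda_n(1-\alpha^{2}),
\]
which is already negative at $\mathbf{x}_0$ whenever $\lambda_n<|\lambda_1|\alpha^{2}/(1-\alpha^{2})$, and in particular whenever $\lambda_n<|\lambda_1|/(n-1)$. Transporting this inequality from $\mathbf{x}_0$ to $\mathbf{x}^*$ via a mean-value estimate on $H$ along the segment of length $|t_*|\le d$, and re-using the third-derivative bound, reproduces the slightly weaker condition on $\lambda_n$ written in the theorem; the extra correction $\tfrac{2\sqrt{n-1}}{|\lambda_1|}\norm{\mathbf{D}F(\mathbf{x}_0)}\,\norm{\mathbf{v}_1\cdot\mathbf{D}F(\xi)}$ is exactly the drift in $\mathbf{e}_i^T H\mathbf{e}_i$ picked up while moving by a distance $d$.

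The main obstacle will be the uniform Taylor bookkeeping: every estimate on $g$, $g'$, $g''$, and on $\mathbf{e}_i^T H\mathbf{e}_i$ must hold throughout the segment $[\mathbf{x}_0,\mathbf{x}^*]$, not merely at $\mathbf{x}_0$, so that the linear root stays a good proxy for the true root and the concavity in the $\mathbf{e}_i$ direction cannot flip on the way. The two side hypotheses in the theorem are precisely the clean sufficient conditions that let each Taylor remainder be absorbed into half of its dominant linear part, and this is what fixes the factor of two in $d$ and the precise coefficient in the bound on $\lambda_n$. Beyond that, no new idea is needed: the proof is the pigeonhole alignment $\alpha\ge 1/\sqrt{n}$, a one-variable intermediate value argument along $\mathbf{v}_1$, and uniform control of the Hessian via the third-derivative hypothesis.
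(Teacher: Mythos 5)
Your proposal follows the same route as the paper almost step for step: pick the coordinate axis $\mathbf{e}_i$ best aligned with the ridge normal $\mathbf{v}_1$ (your pigeonhole bound $\alpha\ge 1/\sqrt{n}$ is exactly the paper's Theorem~\ref{thm:closest_vector}), Taylor-expand $\mathbf{e}_i\cdot\mathbf{D}F$ along the normal line and use $\mathbf{v}_1\cdot\mathbf{D}F(\mathbf{x}_0)=0$ to get $|g(0)|\le\sqrt{1-1/n}\,\norm{\mathbf{D}F(\mathbf{x}_0)}$, then use $g'(0)=\lambda_1(\mathbf{e}_i\cdot\mathbf{v}_1)$ and the third-derivative hypothesis to locate the root, and finally transport the Hessian estimate from $\mathbf{x}_0$ to $\mathbf{x}^*$ to certify concavity in the $\mathbf{e}_i$ direction. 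The only structural difference is at the root-finding step: the paper writes the Taylor expansion as a quadratic in $d$ and invokes a packaged lemma (Theorem~\ref{thm:quadratic}) bounding a root of $ax^2+bx+c$ when $|a|\le b_0^2/(4c_1)$, whereas you phrase it as a linear root $|t^{\mathrm{lin}}|\le d/2$ plus an intermediate-value correction absorbing the $\tfrac12 g''\xi^2$ remainder. These are equivalent in spirit, and your version is arguably a little more honest about the fact that the quadratic coefficient actually depends on $t$ through $\xi$. However, your stated sufficient condition, that the remainder stay smaller than $|g'(0)|t$, is not quite the right inequality for the IVT to close at the endpoint $\pm d$: what you actually need is a sign comparison of $g(\pm d)$ against $g(0)$, which amounts to $\tfrac12|g''|d^2<|g(0)|$ (or, cleanly, $|a|<b_0^2/(4c_1)$ as in the paper's lemma). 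This is a cosmetic fix in the bookkeeping, not a flaw in the idea; both hypotheses of the theorem are used for the same purposes in both arguments.
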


Although these criteria may seem restrictive, they are typically easily satisfied by well defined ridges as we will see in examples below. $\lambda_1$ is typically very large in magnitude (sometimes up to $10^{12}$) while $\norm{\mathbf{D}F}$ and $\abs{\lambda_n}$ are typically $\mathcal{O}(1)$. If $\lambda_n$ is negative the third condition is trivially satisfied.

\begin{proof}
By assumption $F$ is $C^3$ continuous and admits a second derivative ridge through point $\mathbf{x}_0$. Denote the eigenvalues and normalized eigenvectors of the Hessian $\mathbf{D}^2 F(\mathbf{x}_0)$ as $\lambda_1\le \lambda_2 \le  \dotsb \le \lambda_n$ and $\mathbf{v}_1,\mathbf{v}, \dotsc ,\mathbf{v}_n$. By definition, $\mathbf{v_1}$ is normal to the ridge and $\mathbf{D}F(\mathbf{x}_0)$ is parallel to the ridge. We also know that $\{\mathbf{v}_1, \dotsc ,\mathbf{v}_n\}$ forms an orthonormal basis for the space since the Hessian is a real symmetric matrix. Also, let the coordinate system for the space be defined by the orthonormal basis vectors $\{\mathbf{e}_1, \dotsc ,\mathbf{e}_n\}$.

We first choose the coordinate direction that is closest to the ridge normal direction. That is, we choose $\mathbf{e}_j$ from the set $\{\mathbf{e}_i\}$ such that $\abs{ \mathbf{e}_j\cdot\mathbf{v}_1 }\ge \abs{ \mathbf{e}_i\cdot\mathbf{v}_1 } \ \forall \ i \in\{1, \dotsc ,n\}$. By Thm. \ref{thm:closest_vector} we know that
\begin{equation}
\abs{ \mathbf{e}_j\cdot\mathbf{v}_1 }\ge 1/\sqrt{n}. \label{eq:dot_limit}
\end{equation}

The gradient of $F$ near $\mathbf{x}_0$ may be written as
\[
\mathbf{D}F(\mathbf{x}) = \mathbf{D}F(\mathbf{x}_0)+\left((\mathbf{x}-\mathbf{x}_0)\cdot \mathbf{D}\right) \mathbf{D}F(\mathbf{x}_0)+\frac{1}{2!}\left((\mathbf{x}-\mathbf{x}_0)\cdot\mathbf{D}\right)^2 \mathbf{D}F(\mathbf{\xi})
\]
for some $\mathbf{\xi}$ that is a linear combination of $\mathbf{x}$ and $\mathbf{x}_0$. Taking the dot product of $\mathbf{D}F(\mathbf{x})$ with $\mathbf{e}_j$ and considering only points on a line normal to the ridge such that $\mathbf{x}=\mathbf{x}_0+d\mathbf{v}_1$ gives
\[
\mathbf{e}_j \cdot \mathbf{D}F(\mathbf{x}) = \mathbf{e}_j \cdot \mathbf{D}F(\mathbf{x}_0)+d \mathbf{e}_j \cdot \left\{(\mathbf{v}_1 \cdot \mathbf{D}) \mathbf{D}F(\mathbf{x}_0)\right\}+\frac{1}{2!}d^2\mathbf{e}_j\cdot\left\{(\mathbf{v}_1 \cdot \mathbf{D})^2 \mathbf{D}F(\mathbf{\xi})\right\}.
\]
Since $\mathbf{v}_1$ is an eigenvalue of $D^2F(\mathbf{x}_0)$, this can be rewritten as
\begin{equation}
\label{eq:quadratic}
\mathbf{e}_j \cdot \mathbf{D}F(\mathbf{x}) = \mathbf{e}_j \cdot\mathbf{D}F(\mathbf{x}_0)+d\lambda_1 \mathbf{e}_j \cdot \mathbf{v}_1+\frac{1}{2!}d^2\mathbf{e}_j\cdot(\mathbf{v}_1 \cdot \left\{\mathbf{D}^3F(\mathbf{\xi})\right\})
\end{equation}
We can establish an upper bound on $\mathbf{e}_j \cdot\mathbf{D}F(\mathbf{x}_0)$ by noting that $\mathbf{v}_1\cdot\mathbf{D}F(\mathbf{x}_0)=0$ and expressing $\mathbf{e}_j$ and $\mathbf{D}F(\mathbf{x}_0)$ in terms of $\{\mathbf{v}_i\}$:
\begin{align}
\abs{ \mathbf{e}_j \cdot\mathbf{D}F(\mathbf{x}_0) } & = \left( \sum_{i=1}^n (\mathbf{v}_i\cdot\mathbf{e}_j)\mathbf{v}_i\right) \cdot \left( \sum_{i=2}^n (\mathbf{v}_i\cdot\mathbf{D}F(\mathbf{x}_0))\mathbf{v}_i \right) \notag \\
 & = \left( \sum_{i=2}^n (\mathbf{v}_i\cdot\mathbf{e}_j)\mathbf{v}_i\right) \cdot \left( \sum_{i=2}^n (\mathbf{v}_i\cdot\mathbf{D}F(\mathbf{x}_0))\mathbf{v}_i \right) \notag \\
 & \le \norm{ \sum_{i=2}^n (\mathbf{v}_i\cdot\mathbf{e}_j)\mathbf{v}_i }\ \norm{ \sum_{i=2}^n (\mathbf{v}_i\cdot\mathbf{D}F(\mathbf{x}_0))\mathbf{v}_i } \notag \\
  & \le \sqrt{\sum_{i=2}^n (\mathbf{v}_i\cdot\mathbf{e}_j)^2}  \sqrt{\sum_{i=2}^n (\mathbf{v}_i\cdot\mathbf{D}F(\mathbf{x}_0))^2} \notag \\
  & \le \sqrt{1-\frac{1}{n}} \norm{ \mathbf{D}F(\mathbf{x}_0) } \label{eq:limit_1}.
\end{align}

The right hand side of Eq. \ref{eq:quadratic} is a quadratic function in $d$ with bounds on the coefficients
\begin{align*}
 0 \le & \abs{c} \le c_1=\sqrt{1-1/n}\norm{\mathbf{D}F(\mathbf{x}_0)}, \\
 0 < b_0=\frac{\abs{\lambda_1}}{\sqrt{n}} \le & \abs{b} \le b_1=\abs{\lambda_1}, \\
 0 \le & \abs{a} \le \dfrac{\lambda_1^2}{4n\sqrt{n-1}\norm{\mathbf{D}F(\mathbf{x}_0)}} < \frac{b_0^2}{4c_1}.
\end{align*}
Therefore, by Thm. \ref{thm:quadratic}, this function has a root at $d^*$ with
$$
\abs{d^*} < \dfrac{2\sqrt{n-1}\norm{ \mathbf{D}F(\mathbf{x}_0) }}{\lambda_1}.
$$
where $\mathbf{e}_j \cdot \mathbf{D}F(\mathbf{x}) = 0$.

We have now shown that the value of $\mathbf{e}_j \cdot \mathbf{D}F(\mathbf{x})$ is zero somewhere on the line normal to the ridge at $\mathbf{x}_0$ at a distance of less than 
\[2\sqrt{(1-1/n)} \norm{ \mathbf{DF(\mathbf{x}_0)} }/\lambda_1.\]
 Call the zero point $\mathbf{x}^*$. To complete the proof, we will show that $\mathbf{e}_j^T (\mathbf{D}^2F) \mathbf{e}_j$ must be negative at this point so it is a coordinate local maximum. The first order Taylor series for the Hessian near $\mathbf{x}_0$ is
\[
\mathbf{D}^2 F(\mathbf{x}) = \mathbf{D}^2 F(\mathbf{x}_0) + ((\mathbf{x}-\mathbf{x}_0) \cdot \mathbf{D})\mathbf{D}^2F(\mathbf{\xi}).
\]
At $\mathbf{x}^*$ the second derivative of $F$ in the $\mathbf{e}_j$ direction is given by
\begin{align*}
\mathbf{e}_j^T (\mathbf{D}^2 F(\mathbf{x}^*)) \mathbf{e}_j = \mathbf{e}_j^T (\mathbf{D}^2 F(\mathbf{x}_0)) \mathbf{e}_j+ d^* \mathbf{e}_j^T (\mathbf{v}_1 \cdot \mathbf{D}^3F(\mathbf{\xi})) \mathbf{e}_j
\end{align*}
where $d^* < 2\sqrt{n-1}\norm{ \mathbf{D}F(\mathbf{x}_0) }/\lambda_1$ is the distance from the ridge of the point where $\mathbf{e}_j \cdot \mathbf{D}F(\mathbf{x})=0$. Then
\begin{align*}
\mathbf{e}_j^2 \mathbf{D}^2 F(\mathbf{x}) & < \dfrac{\lambda_1}{n} + \left(1-\dfrac{1}{n} \right) \lambda_n + \dfrac{2\sqrt{n-1}\norm{ \mathbf{D}F(\mathbf{x}_0) }}{\abs{ \lambda_1 }} \norm{\mathbf{v}_1 \cdot \mathbf{D}^3F(\mathbf{\xi})} \\
% & < \dfrac{\lambda_1}{2n} + \left(1-\dfrac{1}{n} \right) \lambda_n \\
 & < 0
\end{align*}
where $\norm{\mathbf{v}_1 \cdot \mathbf{D}^3F(\mathbf{\xi})}$ is understood as the induced norm of the matrix that results from the tensor dot product $\mathbf{v}_1 \cdot \mathbf{D}^3F(\mathbf{\xi})$.
Thus, at $\mathbf{x}^*$ the first derivative of $F$ is zero in the $\mathbf{e}_j$ direction and the second derivative is negative. $\mathbf{x}^*$ is a coordinate local maximum.

\end{proof}
As long the criteria of Thm. \ref{thm:ridgeclose2grid} are satisfied along a ridge, there is a nearby coordinate local maximum that will be detected by the ridge tracking algorithm. The error of the location of the detected ridges is bounded by 
\[
d=\dfrac{2\sqrt{n-1}\norm{ \mathbf{D}F(\mathbf{x}_0) }}{\abs{\lambda_1}}.
\]
Since $\norm{\mathbf{D}F}$ is typically $\mathcal{O}(1)$ (if the ridge does not rise and fall too quickly), the error is typically $\mathcal{O}(1/\abs{\lambda_1})$.

\subsection{Ridge examples}
\label{sec:ridge_examples}

We now verify the properties of the ridge tracking algorithm on two two-dimensional examples. An analytically defined surface that admits a ridge and the FTLE field for a time dependent double gyre flow. We begin by examining the surface
\begin{equation}
\label{eq:func1}
F(x,y) = \dfrac{e^{-0.1(x-y)^2}}{\ln((x+y)^2+15)}.
\end{equation}

\begin{wrapfigure}{r}{2.5in}
 \includegraphics[width=2.4in]{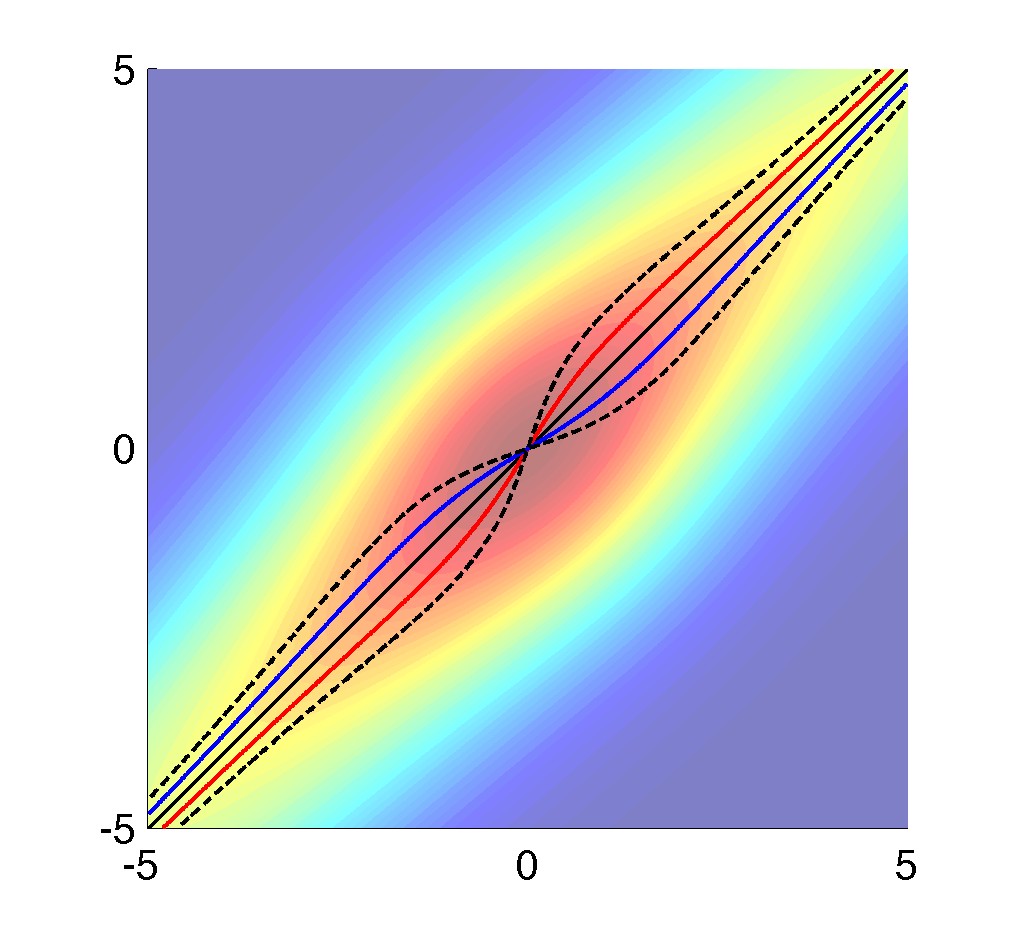}
 \caption{The surface defined by Eq. \ref{eq:func1}. The solid black line denotes the ridge while the dashed black curves show the error bound and the red and blue curves show the coordinate local maxima.}
\label{fig:func1}
\end{wrapfigure}
A contour plot of this function is shown in Fig. \ref{fig:func1} and the function admits a ridge along the line $y=x$. This represents a worst case scenario in terms of ridge orientation since the ridge is at a $45^\circ$ angle to both coordinate directions. The ridge is also much thicker than is typically seen in FTLE ridges. Since this ridge is defined by an analytical function, it is possible to easily compute all the necessary criteria for Thm. \ref{thm:error}. In Fig. \ref{fig:func1}, the actual ridge has been drawn as a solid black line. The error bound, $d$ is drawn in the figure as a pair of dashed black lines and the the coordinate local maxima have been drawn as red (for $\partial F/\partial x=0$) and blue (for $\partial F/\partial y=0$) lines. The other criteria of Thm. \ref{thm:error} are all satisfied despite $\lambda_1$ being much larger than is usually seen in FTLE ridges. We find $\lambda_1\in[-0.1477,-0.0843]$, $\lambda_2\in[-0.0363,0.0071]$, and $\norm{\mathbf{D}F}\le0.0356$ on the ridge in the domain $(x,y)\in[-5,5]^2$.

As a second example to apply the above theorem, we consider a ridge in the FTLE field of a time dependent double gyre. The velocity field consists of two counter rotating gyres with a periodic perturbation that enables transport between the two gyres. The flow is given by the stream function
\begin{equation}
 \psi(x,y,t) = A\sin(\pi f(x,t)))\sin(\pi y) 
\end{equation}
on the domain $[0,2]\times[0,1]$ where
\begin{align}
 f(x,t) & = a(t)x^2+b(t), \notag \\
 a(t) & = \epsilon \sin(\omega t), \\
 b(t) & = 1-2\epsilon\sin(\omega t). \notag
\end{align}
The velocity is given by
\begin{equation}
 u = -\frac{\partial \psi}{\partial y},\ \ v = \frac{\partial \psi}{\partial x}.
\end{equation}
We use parameters $A=0.1$, $\epsilon=0.25$, and $\omega=10$ for this example and set the integration time for computing the FTLE at $T=15$. We will only consider the forward time FTLE field at time $t=0$ which is shown in Fig. \ref{fig:dg2d}.
\begin{figure}
\begin{minipage}[b]{0.49\linewidth}
\centering
 \includegraphics[width=1\linewidth]{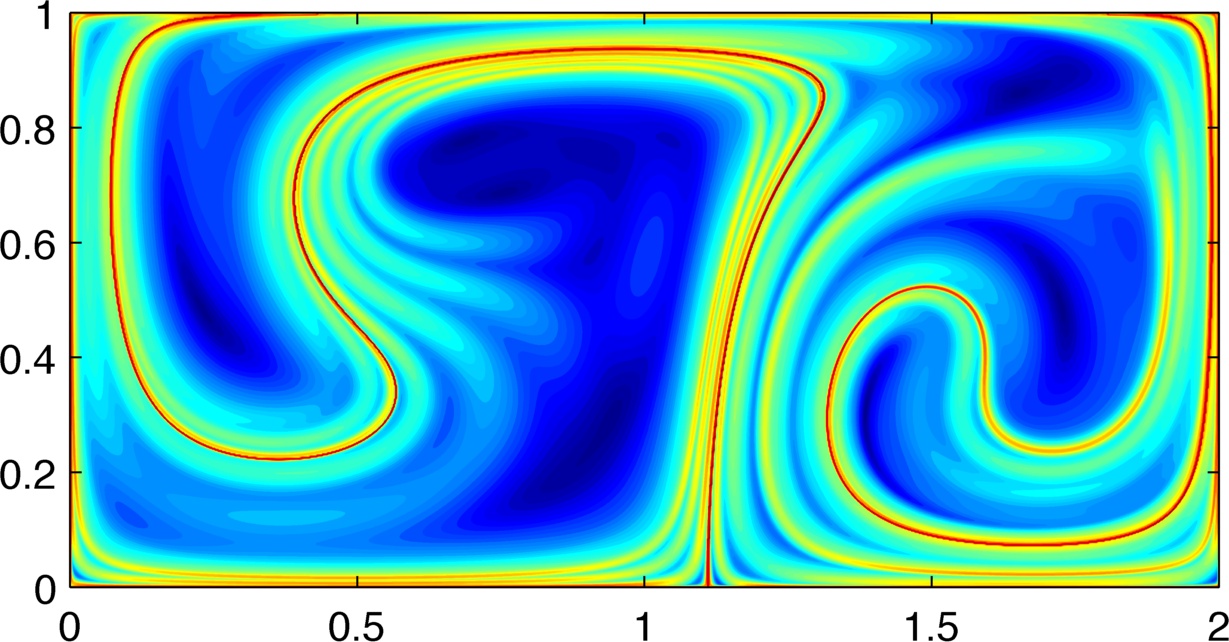}
\end{minipage}
\begin{minipage}[b]{0.49\linewidth}
\centering
 \includegraphics[width=1\linewidth]{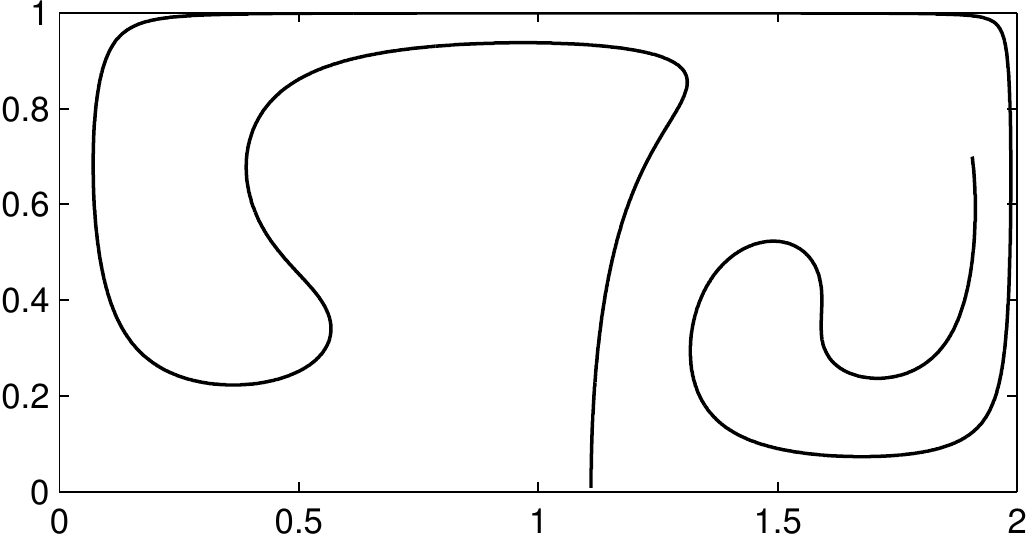}
\end{minipage}
\begin{minipage}[b]{0.49\linewidth}
\centering
 \footnotesize
\bf (a)
\end{minipage}
\begin{minipage}[b]{0.49\linewidth}
\centering
 \footnotesize
\bf (b)
\end{minipage}
 \caption{(a) The forward FTLE field and (b) the main FTLE ridge (with FTLE value $>0.25$) for the time dependent double gyre with $A=0.1$, $\epsilon=0.25$, $\omega=10$, $t=0$, and $T=15$.}
\label{fig:dg2d}
\end{figure}

The main FTLE ridge in this double gyre flow (shown in Fig. \ref{fig:dg2d}b) was computed with very high precision by iteratively estimating the ridge position and tangent direction and then adjusting the position in the normal direction. This is necessary to accurately compute the FTLE values on the ridge as well as the gradient, and Hessian of the FTLE field to bound the error in the ridge locations. Note that the ridge seen in this FTLE field is much sharper than the analytical ridge investigated in the previous example.

\begin{figure}
\begin{minipage}[b]{0.49\linewidth}
\centering
 \includegraphics[width=.95\linewidth]{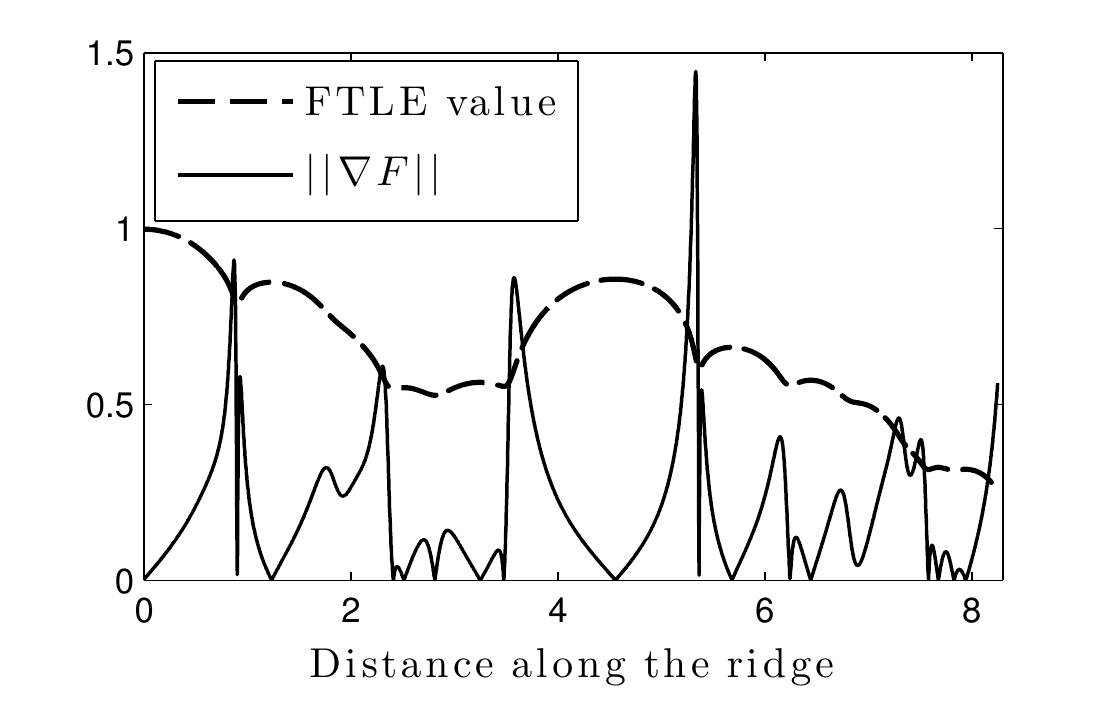}
\end{minipage}
\begin{minipage}[b]{0.49\linewidth}
\centering
 \includegraphics[width=1\linewidth]{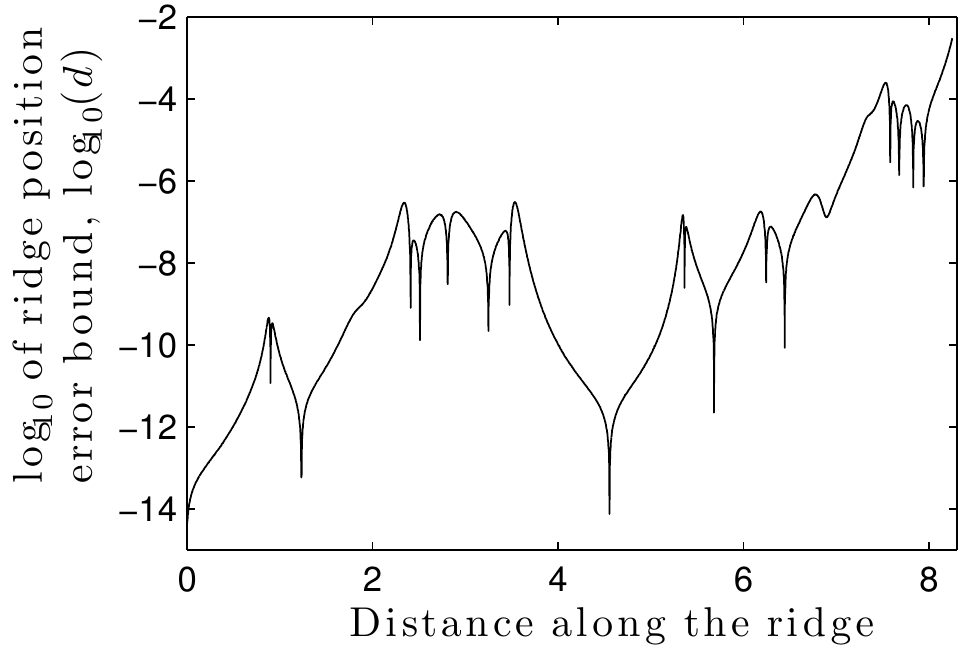}
\end{minipage}
\begin{minipage}[b]{0.49\linewidth}
\centering
 \footnotesize
\bf (a)
\end{minipage}
\begin{minipage}[b]{0.49\linewidth}
\centering
 \footnotesize
\bf (b)
\end{minipage}
 \caption{(a) The FTLE values and gradient along the ridge shown in Fig. \ref{fig:dg2d}b and (b) bound on the ridge position error as given in Thm. \ref{thm:error}. The distance along the ridge ($x$-axis) is computed as the distance along the ridge from the ridge origin at $(1.110,0)$. Since $\norm{\nabla F}=\mathcal{O}(1)$ and $\abs{\lambda_1}\gg1$, the error bound is very small, typically less than $10^{-6}$.}
\label{fig:dg2d_error}
\end{figure}
Fig. \ref{fig:dg2d_error} shows the FTLE values and the norm of the gradient of the FTLE field along the ridge as well as the bound on the ridge location error which is less than $10^{-6}$ for the majority of the ridge and never rises above $10^{-2}$. The norm of the gradient is bounded by $\norm{\nabla F}<1.5$. Additionally the eigenvalues of the Hessian fall in the range $-1.8\times 10^{12}<\lambda_1<-367$ and $-6.1<\lambda_2<37.3$ with averages of $\overline{\lambda_1}=-8.1\times 10^{10}$ and $\overline{\lambda_2}=-0.033$ and medians $\text{med}(\lambda_1)=-5.71\times 10^7$ and $\text{med}(\lambda_2)=-0.30$. The very large magnitude of $\lambda_1$ means that the grid based ridge tracking algorithm will be extremely accurate for this example. For reference, grid spacings of $10^{-2}$ or $10^{-3}$ are typically used when performing LCS computations for this problem.

\section{Algorithm performance}

In this section we discuss the performance of the ridge tracking algorithm by examining two analytically defined examples: a time dependent double gyre and Arnold-Beltrami-Childress flow. We verify that the expected LCS surfaces are extracted and then focus on establishing the computational order of the surface tracking algorithm and compare this to the standard FTLE algorithm that computes the FTLE field everywhere in the domain. 

\subsection{Time dependent double gyre}

The first example presented is the time dependent double gyre used above in section \ref{sec:ridge_examples}. We extend this flow to three dimensions by simply setting the velocity in the z direction to $w=0$. Since there is no $z$ dependence and no velocity in the $z$ direction, the LCS will be independent of z as well. We use the parameters $A = 0.1$, $\epsilon = 0.1$, and $\omega = 2\pi /10$. We set the integration time to be $T=\pm 15$ and compute both the forward and backward LCS. A threshold of $80\%$ of the maximum FTLE value is used to determine the LCS.

\begin{figure}
\begin{minipage}[b]{0.49\linewidth}
\centering
 \includegraphics[width=1 \linewidth]{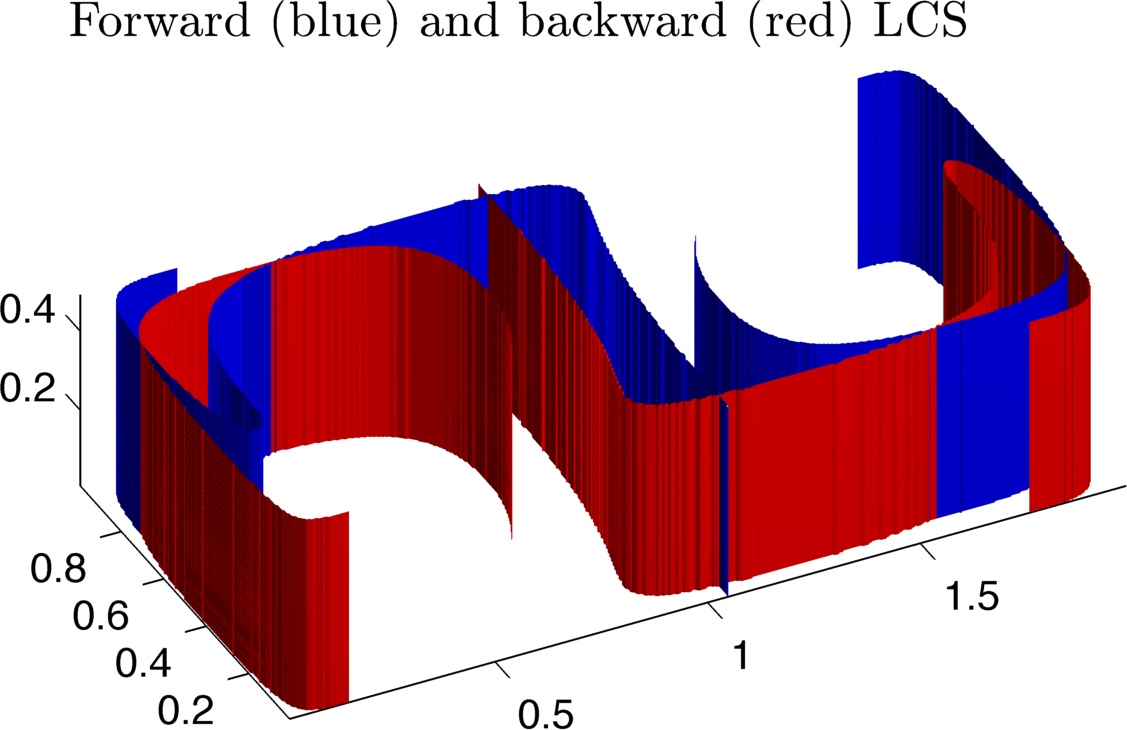}
 \end{minipage}
\begin{minipage}[b]{0.49\linewidth}
\centering
 \includegraphics[width=1\linewidth]{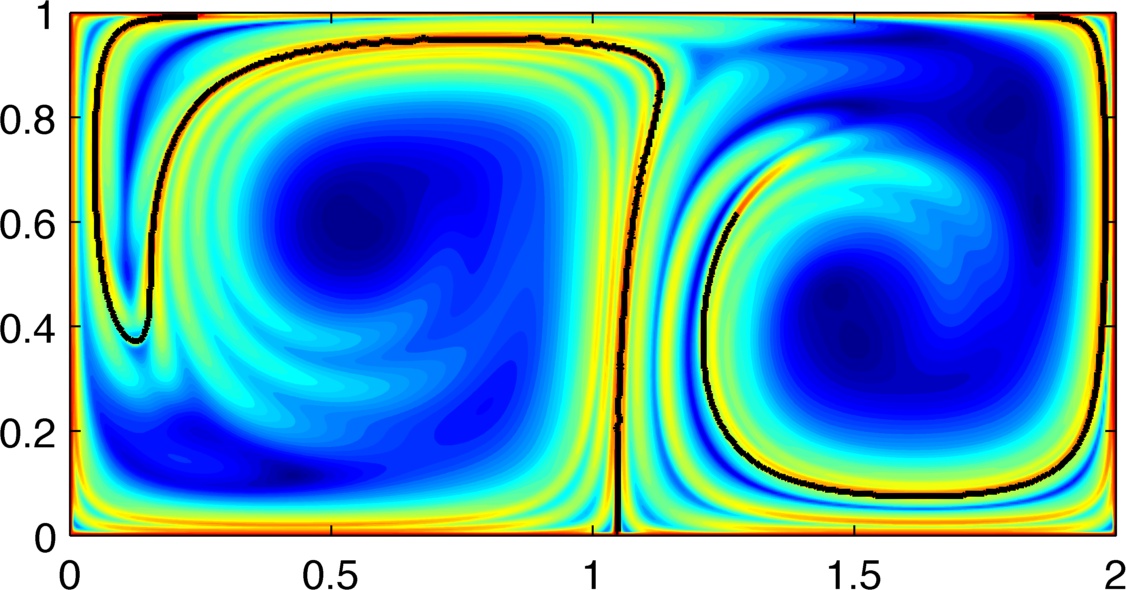}
 \end{minipage}
\begin{minipage}[b]{0.49\linewidth}
\centering
\bf (a)
\end{minipage}
\begin{minipage}[b]{0.49\linewidth}
\centering
\bf (b)
\end{minipage}
 \caption{Surface tracking results for the double gyre flow. The 3D LCS are shown at in (a) with forward LCS colored blue and backward LCS colored red. The forward FTLE field is shown in (b) with the ridge tracking results overlaid as the black curve that precisely lines up with the FTLE ridge.}
 \label{fig:dg}
\end{figure}

The LCS in this system have been well studied in the past and our results agree with previous publications~\cite{Marsden:05g,Mohseni:10g}. The full 3D LCS surfaces are shown in Fig. \ref{fig:dg}. This figure also shows the backward FTLE field overlaid with the results of the 3D ridge tracking algorithm. The LCS extracted by the ridge tracking algorithm lie exactly on top of the ridge in the FTLE field.

The computational timing results are summarized below in Section \ref{sec:timing_results}

\subsection{Arnold-Beltrami-Childress flow}

Arnold-Beltrami-Childress (ABC) flow is a three-dimensional, $2\pi$-periodic flow that has been previously studied with LCS techniques~\cite{HallerG:01a}. The flow is given by
\begin{align}
u & = A\sin(z) + C\cos(y), \notag \\
v & = B\sin(x) + A\cos(z), \\
w & = C\sin(y) + B\cos(x), \notag 
\end{align}
where $A = 1$, $B=\sqrt{2/3}$, and $C=\sqrt{1/3}$. We use an integration time of $T=10$ for FTLE computations and us a threshold value of $70\%$ of the maximum FTLE value.

\begin{figure}
\centering
\begin{minipage}[b]{0.49\linewidth}
\centering
 \includegraphics[height=2.2in]{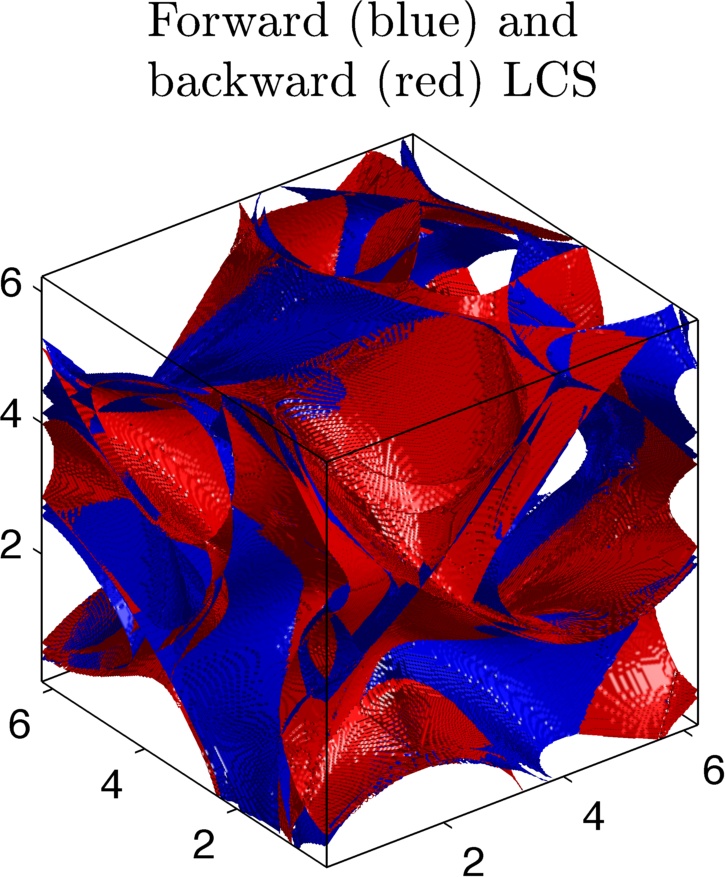}
\end{minipage}
\begin{minipage}[b]{0.49\linewidth}
\centering
 \includegraphics[height=2.2in]{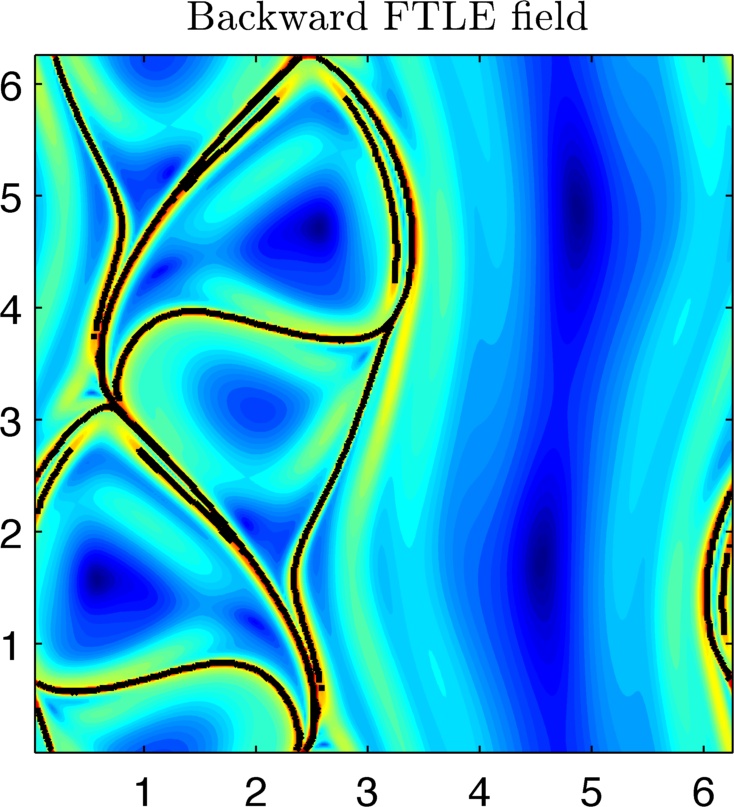}
\end{minipage}
\begin{minipage}[b]{0.49\linewidth}
\centering
\bf (a)
\end{minipage}
\begin{minipage}[b]{0.49\linewidth}
\centering
\bf (b)
\end{minipage}
 \caption{Ridge tracking results for ABC flow. The 3D LCS are shown in (a) with forward LCS colored blue and backward LCS colored red. (b) shows a single plane at $z=\pi$ to display the FTLE field (colored) and corresponding LCS (black curves) as computed with the ridge tracking algorithm.}
 \label{fig:abc}
\end{figure}
The LCS are shown in Fig. \ref{fig:abc} and agree with previously published results~\cite{HallerG:01a,PadbergK:09a}. Fig. \ref{fig:abc} also shows single slice of the FTLE field well as the FTLE field and corresponding LCS at a height of $z=\pi$. The complex LCS present in the ABC flow are a product of the non-trivial invariant manifolds of this flow. They clearly divide the flow into different regions that appear as tube-like structures through the flow domain. These tubes are dynamically distinct from one another and particles travel within and along the tubes without escaping to other regions of the space.

\subsection{Swimming jellyfish}

As a final example, we compute the LCS created by a jetting type jellyfish. The jellyfish body motion was extracted from a digital video of a swimming individual jellyfish (species {\it Sarsia tubulosa}) and use as input to an arbitrary Lagrangian-Eulerian CFD code that computes the resulting flow field and jellyfish acceleration. Full details of this procedure can be found in Sahin and Mohseni~\cite{Mohseni:09b} and Sahin et al.~\cite{Mohseni:09c}.

The resulting axisymmetric velocity field is returned on a moving, non-uniform quadrilateral mesh in $(r,z)$-coordinates. During LCS computations, the $(x,y,z)$ coordinates are converted to $(r,z)$ coordinates to compute the velocity and the the velocity is converted back to Cartesian coordinates for particle advections. The mesh type creates significant complications for velocity interpolation during particle advection since even locating the mesh element that contains a given point is non-trivial. To address this issue efficiently, an alternating digital tree (ADT) is used to search the domain for the element that contains a given drifter particle~\cite{Bonet:91}. The ADT recursively divides the space in half so that at each node in the tree, only one branch must be searched. Since the nodes of the mesh elements are listed in counter clockwise order and the elements are convex, a point $\mathbf{p}$ is inside (including the boundary) an element if and only if
\[
\hat{\mathbf{z}} \cdot [ (\mathbf{v}_i-\mathbf{p}) \times (\mathbf{v}_{(i~{\rm mod}~4)+1}-\mathbf{v}_i) ]\ge 0~\forall~i \in \{1,2,3,4\}
\]
for vertices $\{\mathbf{v}_i\}$.

Once the element containing a particle is found, the velocity must be interpolated onto the drifter particle. Since the elements are generally none rectangular, simple linear interpolation is not possible. Instead, perspective projection is used to map the quadrilateral element and the point of interest onto the unit square. Bilinear interpolation is then used to approximate the velocity of the particle.

\begin{figure}
\centering

\begin{minipage}[b]{0.1\linewidth}
\centering
\topskip0pt
\vspace*{\fill}
\bf (a)
\vspace*{\fill}
\end{minipage}
\begin{minipage}[b]{0.8\linewidth}
\centering
 \includegraphics[width=1\linewidth]{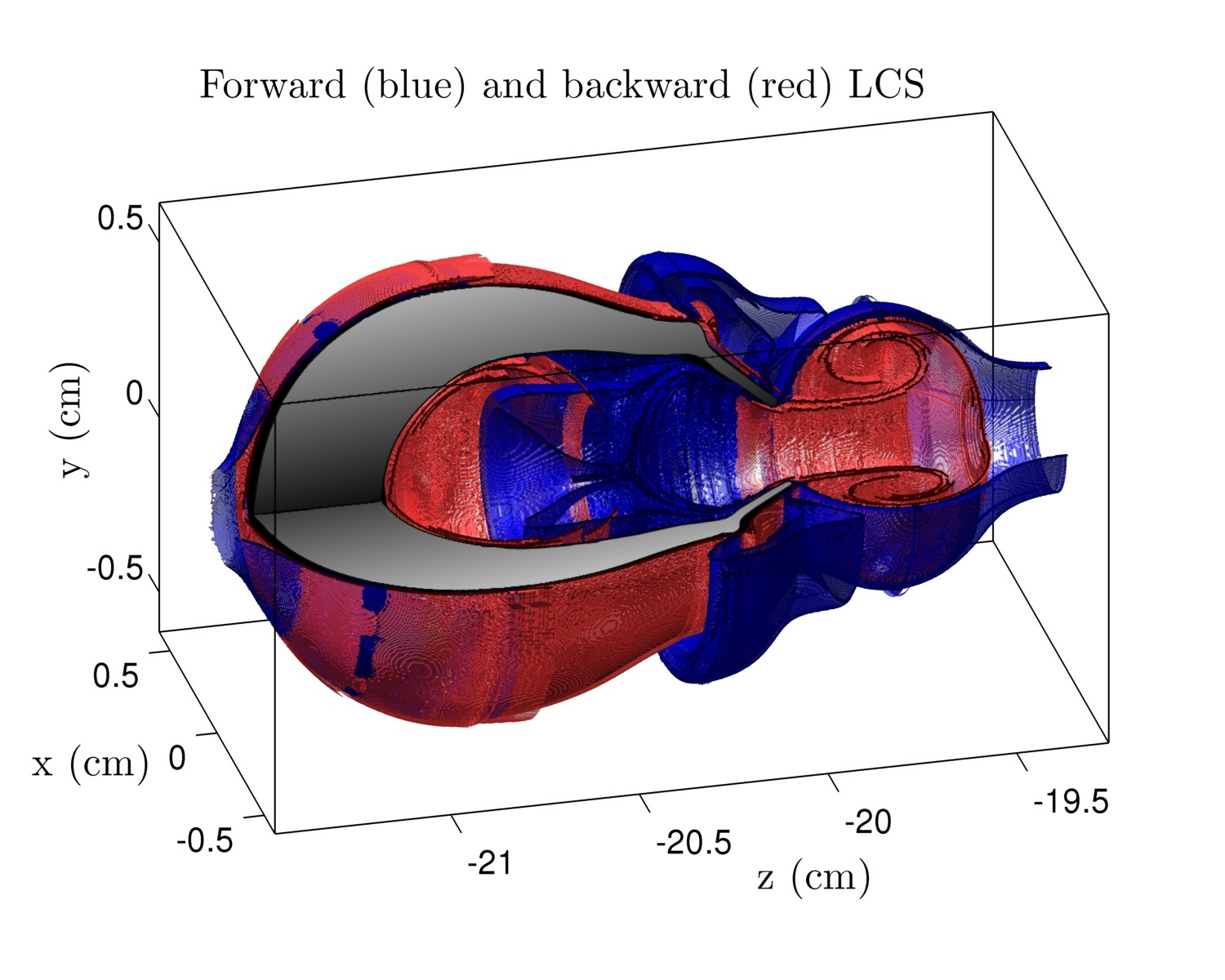}
\end{minipage}\\

\begin{minipage}[b]{0.1\linewidth}
\centering
\bf (b)
\end{minipage}
\begin{minipage}[b]{0.8\linewidth}
\centering
 \includegraphics[width=1\linewidth]{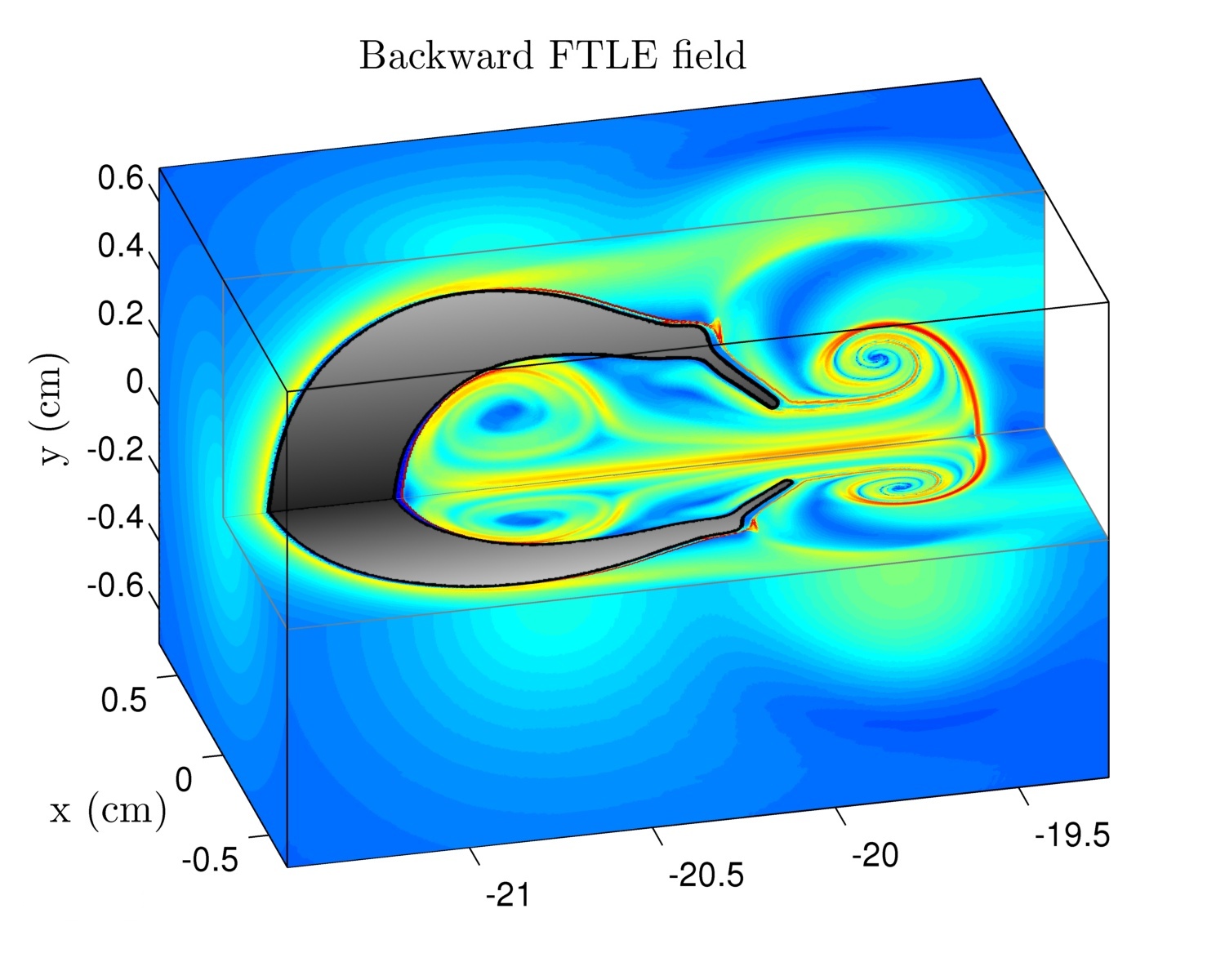}
\end{minipage}

 \caption{(a) The results of the ridge tracking algorithm and (b) the backward FTLE field for the swimming jellyfish. A strong vortex is being ejected near the end of the jellyfish's bell contraction. A cutaway view is shown so that the full LCS structure is visible.}
 \label{fig:sarsia}
\end{figure}
The jellyfish chosen for this investigation has a swimming period of 1 s between contractions and an integration time of 0.5 s was used to compute the LCS. Fig. \ref{fig:sarsia} shows the LCS as computed with the ridge tracking algorithm as well as the backward FTLE field computed with the standard FTLE algorithm. The backward LCS (Fig. \ref{fig:sarsia}a) clearly show a strong vortex being ejected as the jellyfish's bell contraction comes to an end. Additionally, the forward LCS outline fluid ahead of the vortex that will soon be entrained into the vortex ring as well as a region of fluid near the jellyfish bell that will be drawn into the bell during the relaxation phase of jellyfish swimming. These results are in excellent agreement with previously published LCS for this jellyfish~\cite{Mohseni:09d}. Furthermore, this example clearly demonstrates the visualization advantages offered by the ridge tracking algorithm. Since the actual LCS surface are computed it is much simpler to visualize both the forward and backward LCS simultaneously.

\subsection{Timing results}
\label{sec:timing_results}

We expect the computational time of the ridge tracking algorithm to be $\mathcal{O}(1/dx^2)$ for a grid of spacing $dx$ since computations are performed only near the 2D ridge surfaces. To establish the computational order, we have computed the full FTLE field using a standard FTLE algorithm and also computed the LCS with the ridge tracking algorithm presented above for the double gyre, ABC, and jellyfish flows. Computations were performed on at a variety of grid resolutions as well as on a single core and with a parallel code running on 16 or 48 cores. A least squares best fit was performed on a log-log scale for each case, assuming a fit of
\begin{equation}
t_f=C/(dx^{\alpha}).
\end{equation}
The resulting data points and curve fits are shown in Fig. \ref{fig:data} and show that the standard algorithm is $\approx\mathcal{O}(1/dx^{3.0})$ while the ridge tracking algorithm scales approximately as $\mathcal{O}(1/dx^{2.1})$. This performance is maintained for the parallel version of the code.

It is worth noting that in the jellyfish example, for low resolutions the time required to read the velocity data files and build the ADTs used for search at each time step is a significant part of the total computational time. Since the velocity read in time does not change with grid resolution, this has the effect of giving artificially low exponents for the algorithm order (both for the standard algorithm and the ridge tracking algorithm). To compensate for this effect and more accurately estimate the asymptotic order, we estimated the velocity read in and ADT creation time and subtracted this value from the total run time before computing the computational order for the jellyfish example. These modified times are reported in Fig. \ref{fig:data}d. The resulting values for $\alpha$ closely match the values that would result from using only the last few data points and represent a closer approximation of the asymptotic values of $\alpha$.

\begin{figure}
\centering
\begin{minipage}[b]{1\linewidth}
\centering
 \includegraphics[width=.48 \textwidth]{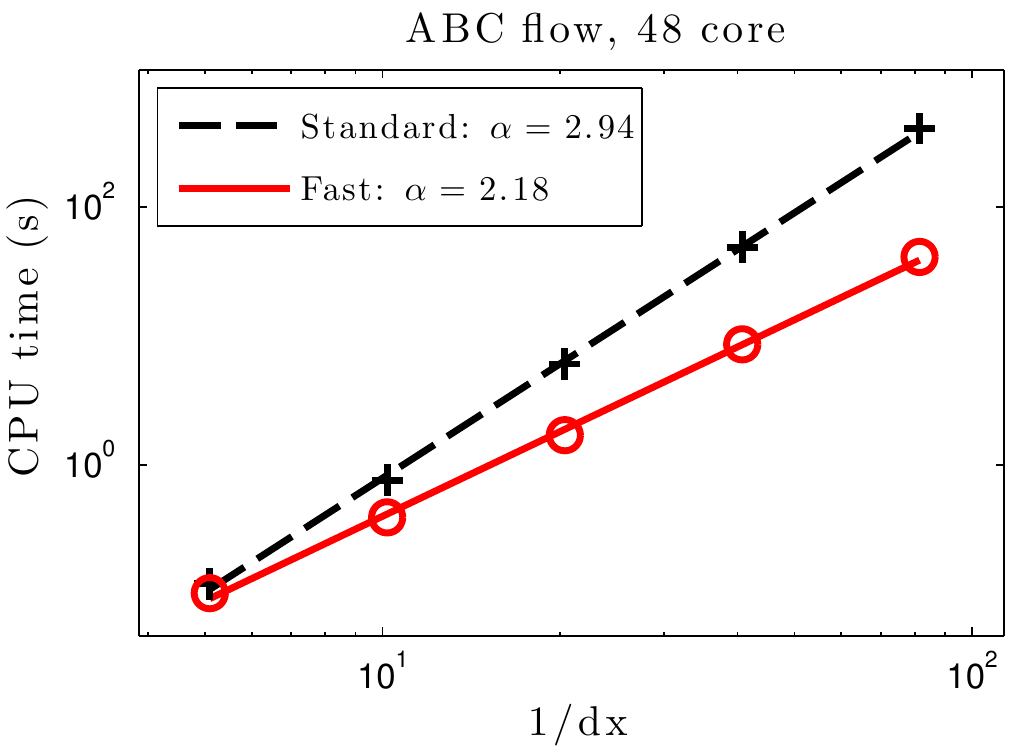}
 \includegraphics[width=.48 \textwidth]{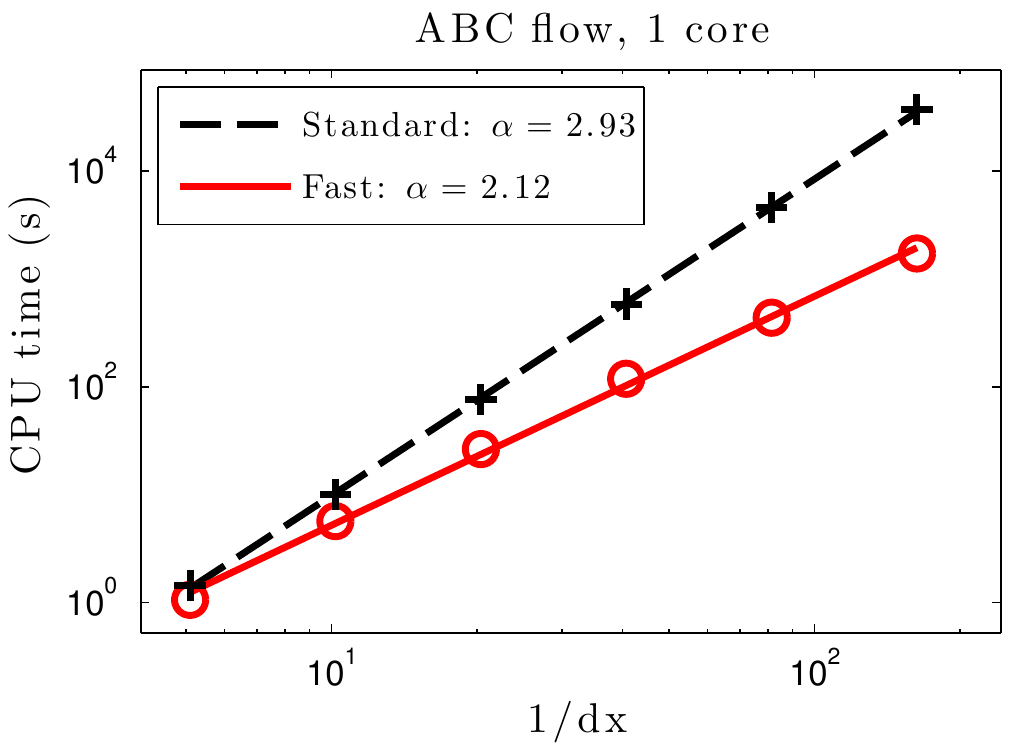}
 \end{minipage}
 
 \begin{minipage}[b]{.48\linewidth}
 \centering
 \footnotesize
 \bf (a)
 \end{minipage}
 \begin{minipage}[b]{.48\linewidth}
 \centering
 \footnotesize
 \bf (b)
 \end{minipage}
 
 \begin{minipage}[b]{1\linewidth}
\hfill
 \end{minipage}
 
\begin{minipage}[b]{1\linewidth}
\centering
 \includegraphics[width=.48 \textwidth]{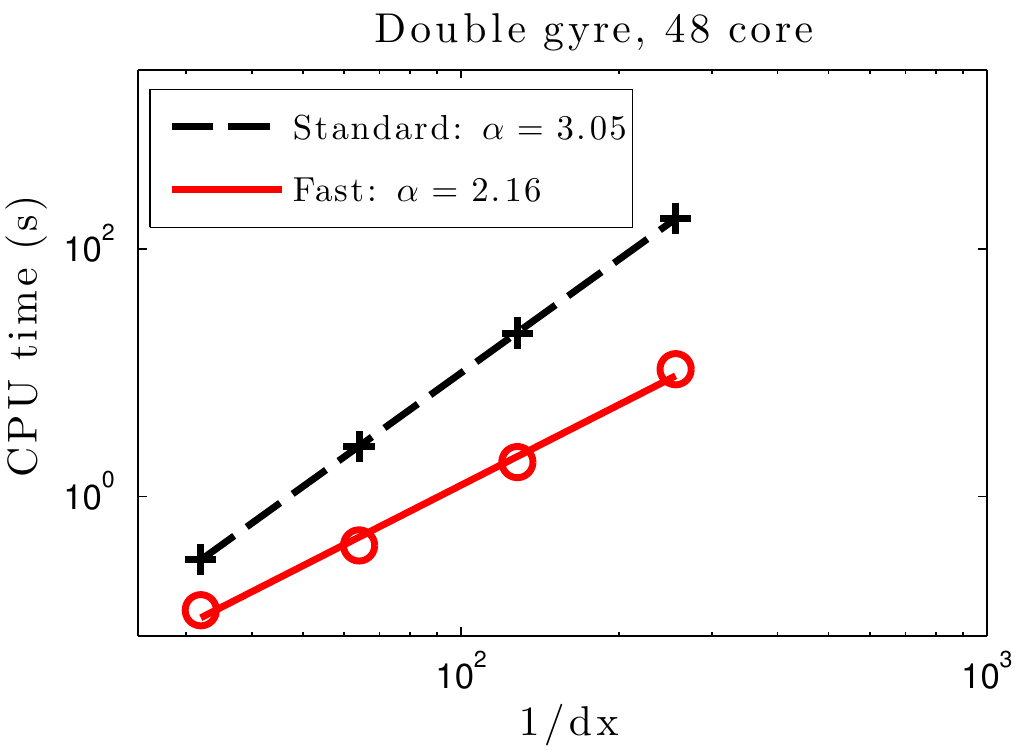}
 \includegraphics[width=.48 \textwidth]{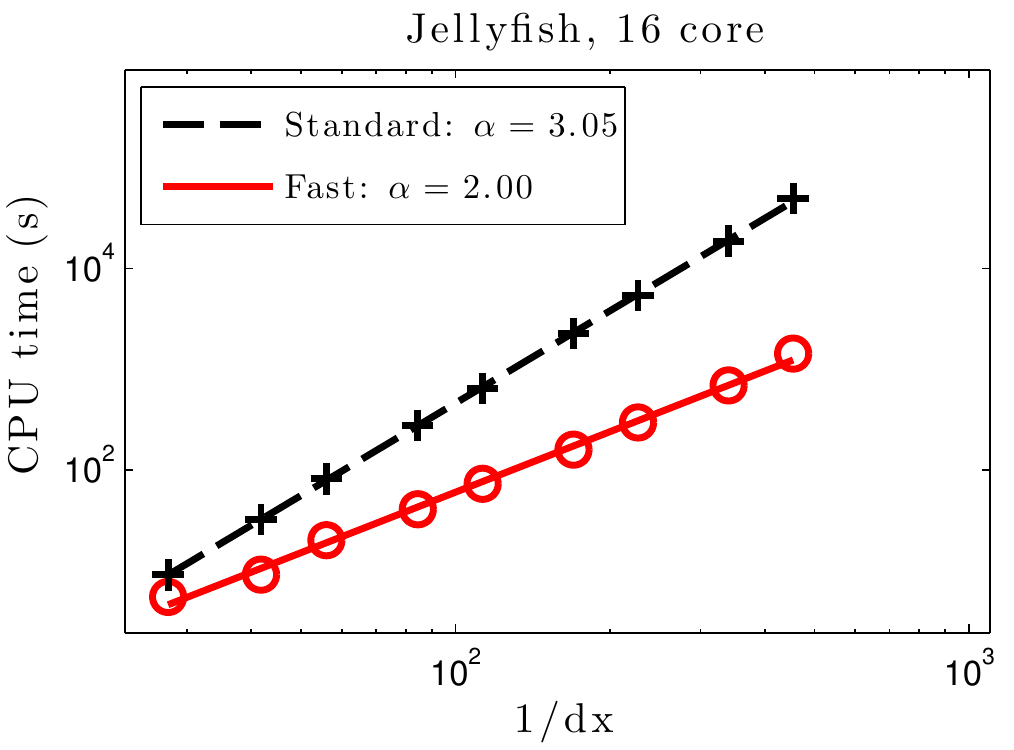}
 \end{minipage}
 
 \begin{minipage}[b]{.48\linewidth}
 \centering
 \footnotesize
 \bf (c)
 \end{minipage}
 \begin{minipage}[b]{.48\linewidth}
 \centering
 \footnotesize
 \bf (d)
 \end{minipage}
 \caption{Timing results for the standard FTLE algorithm and the ridge tracking algorithm. $\alpha$ is the scaling exponent of the algorithm (CPU time $\approx \mathcal{O}(1/dx^\alpha)$) and appears as the slope of the lines in these $\log$-$\log$ plots. The standard algorithm scales as $\mathcal{O}(1/dx^{3.0})$ and the ridge tracking algorithm scales as $\mathcal{O}(1/dx^{2.1})$.}
 \label{fig:data}
\end{figure}

It is also expected that the computational time should be directly proportional to the surface area of the LCS in the domain. We test this by generating an artificial FTLE field that has ridges along pre-selected planes. The planes are defined by $z=0.1(x+y)+h_i$ and the artificial FTLE field is given by
\begin{equation}
\sigma(x,y,z) = \sum_i \exp[-2000(z-0.1(x+y)-h_i)^2].
\label{eq:fake_ftle}
\end{equation}
Each plane that defines an FTLE ridge is tilted slightly so that it is out of line with the grid to make detecting and tracking the ridge slightly more realistic. We use the domain $[0,1]^3$ where each ridge has an area of $A_{LCS}=\sqrt{1.02}$ for $h_i\in (0,0.8)$ and test six different cases corresponding to $1-6$ ridges in the domain. Particles are advected using the double gyre velocity field listed above to account for the particle advection time, but instead of returning the true FTLE field for the double gyre flow, the artificial field of Eq. \ref{eq:fake_ftle} with pre-selected ridges is returned. The results are listed in Table \ref{tab:SA} which shows the computational time according to surface area for four different values of $dx$.

\begin{table}
\centering
\begin{tabular}{cc||c|c|c|c|c|c|}
\hline
\multicolumn{2}{|c||}{ \multirow{2}{*}{CPU time (s)}} & \multicolumn{6}{|c|}{\bf Surface area} \\
\cline{3-8}
 \multicolumn{2}{|c||}{\multirow{1}{*}{}}  & \cellcolor[gray]{0.95} {\bf 1.01} & \cellcolor[gray]{0.95} {\bf 2.02} & \cellcolor[gray]{0.95} {\bf 3.03} & \cellcolor[gray]{0.95} {\bf 4.04} & \cellcolor[gray]{0.95} {\bf 5.05} & \cellcolor[gray]{0.95} {\bf 6.06} \\
\hline
\hline
\multicolumn{1}{|c|}{\multirow{4}{*}{\bf Grid Spacing}} & \cellcolor[gray]{0.95} {\bf 1/64} & 0.533 & 0.689 & 0.844 & 0.987 & 1.162 & 1.333 \\
\cline{2-8}
 \multicolumn{1}{|c|}{} & \cellcolor[gray]{0.95} {\bf 1/128} & 2.713 & 3.306 & 3.913 & 4.625 & 5.385 & 6.121 \\
\cline{2-8}
 \multicolumn{1}{|c|}{} & \cellcolor[gray]{0.95} {\bf 1/256} & 19.54 & 22.12 & 25.10 & 27.68 & 32.79 & 34.80 \\
\cline{2-8}
 \multicolumn{1}{|c|}{} & \cellcolor[gray]{0.95} {\bf 1/512} & 192.8 & 204.5 & 215.7 & 235.7 & 244.7 & 259.9 \\
\hline
\end{tabular}
\caption{CPU time (in seconds) to compute the artificial FTLE ridges of Eq. \ref{eq:fake_ftle} for various surface areas and grid spacings.}
\label{tab:SA}
\end{table}

All four values of $dx$ show a linear relationship between LCS surface area and CPU time and least squares fits result in the following regression coefficients for the fit $t_{CPU} = C_1 + C_2 A_{LCS}$ where $A_{LCS}$ is the surface area of the LCS and $t_{CPU}$ is the required CPU time:
\begin{center}
\begin{tabular}{|c|c|c|}
\hline
$dx$ & $C_1$ & $C_2$\\
\hline
1/64 & 0.368 & 0.157 \\
1/128 & 1.94 & 0.679 \\
1/256 & 15.9 & 3.14 \\
1/512 & 177.9 & 13.5 \\
\hline
\end{tabular}
\end{center}

The relatively large values of $C_1$ for all these cases means that there is some initial cost regardless of the amount of LCS surface area. This is due to the cost associated with initializing data structures and performing the initial ridge detection step as described in Section \ref{sec:initial_detection}. Additionally, $C_1$ appears to scale roughly as $\mathcal{O}(1/dx^3)$. Since the current implementation of the ridge tracking code uses and allocates full 3D arrays rather than using sparse data structures, it is reasonable to expect this relationship. On the other hand, $C_2$ scales roughly as $\mathcal{O}(1/dx^2)$. This accounts for the majority ridge tracking part of the algorithm.

The $\mathcal{O}(1/dx^{3.0})$ scaling of $C_1$ may explain why the overall computational order of the algorithm is $\mathcal{O}(1/dx^{2.1})$ rather than $\mathcal{O}(1/dx^2)$. As resolution is increased beyond current capabilities, the initialization cost of may be expected to completely overwhelm the ridge tracking cost due to this difference in order. The implementation of sparse data structures would likely help solve this problem.

\section{Conclusions}

As the problems being analyzed with LCS techniques become increasingly complex, the corresponding computations become increasing expensive. The ridge tracking algorithm presented in this paper has been shown to reduce the order of LCS computations from $\mathcal{O}(1/dx^{3.0})$ to about $\mathcal{O}(1/dx^{2.1})$ for three-dimensional flows. This reduction in order allows potentially tremendous savings in computational time as the required LCS resolution is increased.

The effectiveness and algorithm properties have been demonstrated by several examples, including the analytically defined double gyre and ABC flow and the swimming jellyfish that is defined by velocity stored in data files. On single processor as well as multicore machines, the ridge tracking algorithm shows the expected change in computational order and provides large speed ups for all tested cases.

We have also proved that although the ridge tracking algorithm detects {\it coordinate local maxima} rather than the actual ridges, for well defined ridges the associated error is small. The distance between a ridge and the surfaces detected by this algorithm is $\mathcal{O}(\norm{\mathbf{D}F}/\abs{\lambda_1})$ where $\mathbf{D}F$ is the gradient along the ridge and $\abs{\lambda_1}$ is the second derivative normal to the ridge (or, equivalently the smallest eigenvalue of the Hessian of $F$). In typical examples this error is at smaller than the grid spacing and for well defined ridges it may be several orders of magnitude smaller than the grid spacing.

The general framework of this algorithm could easily be adapted to other LCS definitions or techniques. For example, the finite size Lyapunov exponent (FSLE) could easily be used in lieu of the FTLE. A 2D version of this algorithm (or the grid-less algorithm of Lipinski and Mohseni~\cite{Mohseni:10g}) could be relatively easily modified to use the variational formulation for LCS which was recently proposed by Haller~\cite{HallerG:11a,HallerG:11b}.

Future work will focus on implementing sparse data structures to further reduce the computational cost of large LCS calculations and on implementing some addition LCS techniques such as those mentioned in the previous paragraph. We expect that the use of sparse data structures may further reduce the order of computation from $\mathcal{O}(1/dx^{2.1})$ nearer to $\mathcal{O}(1/dx^{2.0})$.

\bibliographystyle{model1-num-names}
\bibliography{RefA1}

\appendix

\section{Additional proofs}

This appendix contains additional theorems and proofs referenced by the proofs of the theorems in the text.

\begin{theorem}
\label{thm:closest_vector}
Given an orthonormal basis $\{\mathbf{e}_i\}$ for $\mathbb{R}^n$ and an arbitrary unit vector $\mathbf{v}$ there is a basis vector $\mathbf{e}_j$ that maximizes $\abs{ \mathbf{v} \cdot \mathbf{e}_i }$ and $\abs{ \mathbf{v} \cdot \mathbf{e}_j } \ge 1/\sqrt{n}$.
\end{theorem}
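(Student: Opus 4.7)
The plan is to use the fact that the components of $\mathbf{v}$ in the orthonormal basis $\{\mathbf{e}_i\}$ must have squares summing to one, and then apply a simple pigeonhole / averaging argument to force at least one component to be at least $1/\sqrt{n}$ in absolute value.

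First, since $\{\mathbf{e}_i\}_{i=1}^n$ is an orthonormal basis for $\mathbb{R}^n$, I would expand
\[
\mathbf{v} = \sum_{i=1}^n (\mathbf{v}\cdot\mathbf{e}_i)\,\mathbf{e}_i,
\]
and then invoke Parseval's identity (equivalently, take the inner product of the expansion with itself and use orthonormality) to conclude that
\[
1 = \norm{\mathbf{v}}^2 = \sum_{i=1}^n (\mathbf{v}\cdot\mathbf{e}_i)^2.
\]

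Second, I would observe that because the set $\{|\mathbf{v}\cdot\mathbf{e}_i|: i=1,\dotsc,n\}$ is a finite set of real numbers, it attains its maximum at some index $j$. Then I would argue by contradiction (or equivalently by an averaging argument): if $(\mathbf{v}\cdot\mathbf{e}_j)^2 < 1/n$, then every term in the Parseval sum would be strictly less than $1/n$, making the total strictly less than $1$, contradicting the identity above. Hence $(\mathbf{v}\cdot\mathbf{e}_j)^2 \ge 1/n$, and taking square roots gives $|\mathbf{v}\cdot\mathbf{e}_j| \ge 1/\sqrt{n}$.

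There is no real obstacle here; the only thing to be careful about is being explicit that the maximum is attained (finite index set) and that the inequality is non-strict in the conclusion, matching the bound used in Eq.~\eqref{eq:dot_limit} of the main proof. The bound $1/\sqrt{n}$ is also sharp, as witnessed by $\mathbf{v}=\tfrac{1}{\sqrt{n}}\sum_i\mathbf{e}_i$, so no improvement is possible without additional hypotheses.
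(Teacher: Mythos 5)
Your proposal is correct and follows essentially the same route as the paper: expand $\mathbf{v}$ in the orthonormal basis, use Parseval's identity $\sum_i (\mathbf{v}\cdot\mathbf{e}_i)^2 = 1$, and derive a contradiction from the assumption that the maximal component is below $1/\sqrt{n}$. The paper phrases the contradiction through the norm (with square roots) rather than directly on the sum of squares, but the underlying pigeonhole argument is identical.
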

\begin{proof}
Existence of the maximum is implied by the extreme value theorem since the dot product is a continuous function and the set $\{\mathbf{e}_i\}$ is finite and therefore closed and bounded. The other part of the proof is as follows:
Assume $\abs{ \mathbf{v} \cdot \mathbf{e}_j } <1/\sqrt{n}$. Then
\begin{align*}
1=\norm{ \mathbf{v} } & = \norm{ \sum_{i=1}^n(\mathbf{e}_i \cdot \mathbf{v})\mathbf{e}_i } \\
& = \sqrt{(\sum_{i=1}^n(\mathbf{e}_i \cdot \mathbf{v})\mathbf{e}_i)\cdot(\sum_{i=1}^n(\mathbf{e}_i \cdot \mathbf{v})\mathbf{e}_i)}\\
& = \sqrt{\sum_{i=1}^n(\mathbf{e}_i \cdot \mathbf{v})^2}\\
& \le \sqrt{\sum_{i=1}^n(\mathbf{e}_j \cdot \mathbf{v})^2} = \sqrt{n(\mathbf{e}_j \cdot \mathbf{v})^2}\\
& < \sqrt{n(1/\sqrt{n})^2} \\
& < 1
\end{align*}
which is a contradiction so $\abs{ \mathbf{v} \cdot \mathbf{e}_j } \ge 1/\sqrt{n}$.
\end{proof}

\begin{theorem}
 \label{thm:quadratic}
 Given a quadratic equation $y=ax^2+bx+c$ with bounds on the real valued constants 
\begin{align*}
 & \abs{c} \le c_1 \\
 0 < b_0 \le & \abs{b} \le b_1 \\
 & \abs{a} \le \dfrac{b_0^2}{4c_1}
\end{align*}
there is a real root $x^*$ such that $\abs{x^*}<\dfrac{2c_1}{b_0}$.
\end{theorem}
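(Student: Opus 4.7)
The plan is to exploit the numerically stable form of the quadratic formula, which yields an explicit expression for the root of smallest magnitude and makes the desired bound almost immediate. First I would dispose of the degenerate case $a=0$: then the equation is linear with unique root $x^{*}=-c/b$, and the hypotheses give $|x^{*}|=|c|/|b|\le c_{1}/b_{0}<2c_{1}/b_{0}$, as required.

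For $a\ne 0$, the first substantive step is to check that the discriminant is nonnegative so that real roots exist. From $|a|\le b_{0}^{2}/(4c_{1})$ and $|c|\le c_{1}$ one obtains $|4ac|\le b_{0}^{2}\le b^{2}$, hence $b^{2}-4ac\ge b^{2}-b_{0}^{2}\ge 0$. The two roots are therefore real, and I would then pick the one written in the rationalized form
\[
x^{*}=\frac{2c}{-b-\operatorname{sign}(b)\sqrt{b^{2}-4ac}},
\]
where the sign is chosen so that the denominator is the sum (in absolute value) of $|b|$ and $\sqrt{b^{2}-4ac}$ rather than their difference. This is the standard device for extracting the smaller root without subtractive cancellation, and it is the key algebraic move in the proof.

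From this formula,
\[
|x^{*}|=\frac{2|c|}{|b|+\sqrt{b^{2}-4ac}}\le\frac{2|c|}{|b|}\le\frac{2c_{1}}{b_{0}},
\]
so the bound falls out immediately. The remaining work is simply to verify that the inequality is strict. If $c=0$ then $x^{*}=0$ and the bound is trivial. If $c\ne 0$ and the discriminant is strictly positive, then the added $\sqrt{b^{2}-4ac}>0$ in the denominator gives strict inequality at the first step. The one genuinely delicate point is the boundary case $b^{2}-4ac=0$, which forces simultaneous equality in all three hypothesis bounds $|a|=b_{0}^{2}/(4c_{1})$, $|b|=b_{0}$, $|c|=c_{1}$; I would argue that this is the main (and essentially the only) obstacle, and handle it either by observing that it forces $|x^{*}|=2c_{1}/b_{0}$ but does not arise in the application in Theorem~\ref{thm:error} (where the hypothesis on $|a|$ is already strict, as noted in the bounds used there), or by a direct small perturbation argument showing the corresponding configuration is non-generic.
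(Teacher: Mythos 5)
Your proof takes essentially the same route as the paper: split on $a=0$, use the rationalized form of the quadratic formula $x = 2c/(-b\mp\sqrt{b^2-4ac})$, choose the sign so the denominator has magnitude $|b|+\sqrt{b^2-4ac}$, and bound $|x^*|\le 2|c|/|b|\le 2c_1/b_0$. Your attention to the boundary case $b^2-4ac=0$ is in fact more careful than the paper's own argument, which asserts the discriminant is strictly positive (it need not be under the stated hypotheses) and thereby silently skips the degenerate configuration $|a|=b_0^2/(4c_1)$, $|b|=b_0$, $|c|=c_1$ in which $|x^*|=2c_1/b_0$ exactly; as you note, the strict bound on $|a|$ used in the application in Theorem~\ref{thm:error} keeps this edge case from arising there.
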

\begin{proof}
 If $a=0$, there is a single root at $x^*=-c/b$ and $\abs{x^*}=\abs{-c/b}<2c_1/b_0$.\\
 If $a\ne0$, the roots are given by the alternate form of the quadratic formula
$$
 x = \dfrac{2c}{-b\pm\sqrt{b^2-4ac}}.
$$
The discriminant, $\Delta=b^2-4ac>b_0^2-4\dfrac{b_0^2}{4c_1}c_1=0$ is positive so the quadratic has two real roots. Denote these roots
\begin{align*}
 x_1 =  \dfrac{2c}{-b + \sqrt{b^2-4ac}} \\
 x_2 =  \dfrac{2c}{-b - \sqrt{b^2-4ac}}.
\end{align*}
If $b\le0$ let $x^*=x_1$, otherwise let $x^*=x_2$. Then
\begin{align*}
\abs{x^*} & = \dfrac{2\abs{c}}{\abs{b}-\sqrt{b^2-4ac}} \\
 & < \dfrac{2c_1}{b_0}
\end{align*}

\end{proof}

\end{document}